\newcounter{hypo}
\definecolor{brightmaroon}{rgb}{0.76, 0.13, 0.28}
\definecolor{britishracinggreen}{rgb}{0.0, 0.26, 0.15}
\definecolor{cadmiumgreen}{rgb}{0.0, 0.42, 0.24}
\definecolor{darkmidnightblue}{rgb}{0.0, 0.2, 0.4}
\definecolor{darkpink}{rgb}{0.91, 0.33, 0.5}
\definecolor{teal}{rgb}{0.0, 0.5, 0.5}
\definecolor{burgundy}{rgb}{0.5, 0.0, 0.13}
\definecolor{azure}{rgb}{0.0, 0.5, 1.0}
\definecolor{ao}{rgb}{0.0, 0.5, 0.0}
\newtheorem{theorem}{Theorem}[section]
\newtheorem{lemma}[theorem]{Lemma}
\newtheorem{proposition}[theorem]{Proposition}
\newtheorem{remark}[theorem]{Remark}
\theoremstyle{definition} 
\newcommand{\fact}[1]{#1\mathpunct{}!}
\numberwithin{equation}{section}
\title[Resolvent estimates and resonance free domains for systems]
{Resolvent estimates and resonance free domains for Schr\"odinger operators with matrix-valued potentials}
 \author[M. Assal]{Marouane Assal}
\address{Marouane Assal, Facultad de Matem\'aticas, Pontificia Universidad Cat\'olica de Chile, Vicuna Mackenna 4860, Santiago de Chile}
\email{marouane.assal@mat.uc.cl}
\subjclass[2010]{}
\keywords{Matrix Schr\"odinger operators, semiclassical resolvent estimates, Carleman estimates, resonance free domains}
\begin{document}
\definecolor{qqqqff}{rgb}{0,0,1}

\maketitle

\begin{abstract}
We establish semiclassical resolvent estimates for Schr\"odinger operators with long-range matrix-valued potentials. As an application we prove resonance free domains both in trapping and non-trapping situations. Our results generalize the well-known results of \cite{Burq1, Martinez1} in the case of scalar Schr\"odinger operators.
\end{abstract}


\section{Introduction and Background}


We are interested in resolvent estimates and quantum resonances for Schr\"odinger operators with long-range matrix-valued potentials in the semiclassical regime. The results established here are a part of a work in progress \cite{Assal1} where we will give some applications to scattering theory for matrix Schr\"odinger operators.

\subsection{Preliminaries} Consider the semiclassical Schr\"odinger operator on the Hilbert space $L^2(\mathbb R^d;\mathbb C^N)$, $d\geq 1$,
\begin{equation}\label{Opprinc}
P(h) := -h^2 \Delta \cdot I_N + V(x),
\end{equation}
where $I_N$ is the identity $N\times N$ matrix and $V: \mathbb R^d\rightarrow \mathcal{H}_N$ is a smooth $N\times N$ hermitian matrix-valued potential, i.e.,
$$
V(x) = (V_{ij}(x))_{1\leq i,j\leq N}, \quad V_{ij}(x) = \overline{V_{ji}(x)},
$$ with long-range behavior at infinity, i.e., there exists a constant matrix $V_{\infty}\in \mathcal{H}_N$ and $\rho_0>0$ such that
\begin{equation}\label{hyp2}
\Vert \partial_x^{\alpha} \big(V(x) - V_{\infty} \big) \Vert_{N\times N} = \mathcal{O}_{\alpha}(\langle x\rangle^{-\rho_0 - \vert\alpha\vert}) , \quad \forall x\in \mathbb R^d, \alpha\in \mathbb N^d,
\end{equation}
with $\langle x\rangle:= (1+ \vert x\vert^2)^{1/2}$. Here $\mathcal{H}_N$ denotes the space of $N\times N$ hermitian matrices endowed with the norm $\Vert \cdot \Vert_{N\times N}$ defined by \eqref{matrixnorm} and $h>0$ is the semiclassical parameter. Such operators arise as important models in molecular physics and quantum chemistry, for instance in the Born-Oppenheimer approximation which allows for a drastic reduction of problem size when dealing with molecular systems. In this case, the semiclassical parameter $h$ represents the square root of the quotient between the electronic and nuclear masses (see e.g. \cite{Klein, Jecko}).

Set $P_{\infty}(h):= -h^2\Delta \cdot I_N + V_{\infty}$ and we denote $\lambda_{1}\leq \cdots \leq \lambda_{N}$ the eigenvalues of the constant matrix $V_{\infty}$. Without loss of generality, we can assume that $V_{\infty}$ is diagonal, i.e., 
$$
V_{\infty} = {\rm diag}\,(\lambda_{1},...,\lambda_{N}).
$$

The operator $P_{\infty}(h)$ is self-adjoint in $L^2(\mathbb R^d;\mathbb C^N)$ with domain the Sobolev space $H^2(\mathbb R^d;\mathbb C^N)$, and its spectrum coincides with $[\lambda_{1},+\infty)$. Since $V-V_{\infty}$ is $\Delta$-compact, it follows by the Weyl perturbation Theorem that the operator $P(h)$ admits a unique self-adjoint realization in $L^2(\mathbb R^d;\mathbb C^N)$ with domain $H^2(\mathbb R^d;\mathbb C^N)$ and for any fixed $h>0$, $\sigma_{{\rm ess}}(P(h)) = \sigma_{{\rm ess}}(P_{\infty}(h)) =  [\lambda_{1},+\infty)$. Thus, the operator $P(h)$ may have discrete eigenvalues in $(-\infty,\lambda_{1})$ and embedded ones in the interval $[\lambda_{1},\lambda_{N}]$.

Let 
$$
R_h(z):= (P(h)-z)^{-1}\in \mathcal{L}(L^{2}(\mathbb R^d;\mathbb C^N)), \;\; \; \Im z\neq 0,
$$
denotes the resolvent of $P(h)$. It follows by the Limiting Absorption Principle (see e.g. \cite{Agmon}), using the dilation generator as a scalar conjugate operator, that for any $E>\Vert V_{\infty}\Vert_{N\times N}$ and any $s>\frac{1}{2}$, the boundary value of the resolvent 
$$
R_h(E\pm i0) := \lim_{\varepsilon \to 0^+} R_h(E\pm i\varepsilon)
$$ 
exists as a bounded operator in $\mathcal{L}(L^{2,s}(\mathbb R^d;\mathbb C^N), L^{2,-s}(\mathbb R^d;\mathbb C^N))$. Here $L^{2,s}(\mathbb R^d;\mathbb C^N)$ denotes the space of $\mathbb C^N$-valued functions that are square integrable on $\mathbb R^d$ with respect to the measure $\langle x \rangle^s dx$, equipped with its natural norm 
$$
\Vert u \Vert_{L^{2,s}(\mathbb R^d;\mathbb C^N)} := \Vert \langle x \rangle^s u \Vert_{L^{2}(\mathbb R^d;\mathbb C^N)} = \left(\int_{\mathbb R^d} \vert u(x)\vert_{\mathbb C^N}^2 \langle x \rangle^{2s} dx\right)^{\frac12}.
$$

A natural question is to study the behavior of $R_h(E\pm i 0)$ as $h\rightarrow 0^+$, more precisely, to estimate the size of the norm 
\begin{equation}\label{main norm}
\Vert R_h(E\pm i 0)  \Vert_{ L^{2,s}(\mathbb R^d;\mathbb C^N)\to L^{2,-s}(\mathbb R^d;\mathbb C^N) }, \;\;\;\; s>\frac12, 
\end{equation}
with respect to $h$. This problem is important in scattering theory, for instance, for studying the behavior of observables like the scattering matrix and the total cross section (see e.g., \cite{Robert-Tamura, Nakamura, Michel}). Moreover, it is well known that the semiclassical behavior of the resolvent near a given energy-level have a deep relationship with the existence or the absence of resonance near this level. As we shall see later, getting estimates on the size of the above norm entails important results on the location of resonances of the operator $P(h)$ near the energy-level $E$. Let ${\rm Res}\,(P(h))$ denote the set of resonances of $P(h)$ which we will define rigourously in Section \ref{Section2}.

\subsection{Background on the scalar case} In the scalar case $N=1$ (and $V_{\infty}=0$), it is well known that the size of \eqref{main norm} is $\mathcal{O}(h^{-1})$ in non-trapping situations, that is  
\begin{equation}\label{entr}
\Vert R_h(E\pm i 0)  \Vert_{ L^{2,s}(\mathbb R^d)\to L^{2,-s}(\mathbb R^d) } \lesssim  h^{-1} , \;\; \forall s>\frac12,
\end{equation}
provided that $E>0$ is non-trapping for the classical Hamiltonian $p(x,\xi):=\xi^2+ V(x)$, $(x,\xi)\in T^*\mathbb R^d$, associated with $P(h)$. We recall that an energy $E>0$ is said to be non-trapping for $p$ if the set of trapped trajectories at $E$ defined by
$$
\mathcal{T}(E):= \left\{(x,\xi)\in p^{-1}(E); {\rm exp}\,(tH_p)(x,\xi) \nrightarrow \infty \; \text{as}\;\; t\rightarrow \pm \infty \right\}
$$
is empty. Here ${\rm exp}\,(tH_p): T^*\mathbb R^d \rightarrow T^*\mathbb R^d$ is the flow generated by the Hamiltonian vector field $H_p= 2 \xi \cdot \partial_x - \nabla_x V \cdot \partial_{\xi}
$. This result was proved by Robert and Tamura \cite{Robert-Tamura} using Mourre theory and Fourier integral methods. In particular, estimate \eqref{entr} is the key ingredient in the estimation of the behavior of the scattering cross-section and the complete asymptotic expansion in powers of $h$ for the spectral shift function associated to the operator pair $(-h^2\Delta + V(x), -h^2 \Delta)$. A shorter proof using the construction of a global escape function and Mourre theory was given later by G\'erard and Martinez \cite{Gerard}. 

Another consequence of the non-trapping hypothesis on the energy $E>0$ is the following absence of resonances result due to Martinez \cite{Martinez1}
$$
{\rm Res} (P(h)) \cap  \bigg\{ z\in \mathbb C;\, {\rm Re}\, z \in [E-\varepsilon_0,E+\varepsilon_0] \,\, {\rm and} \,\, {\rm Im}\, z\geq -C h \vert \ln h \vert  \bigg\} = \emptyset,
$$
for some $\varepsilon_0>0$ and all $C>0$, $h\in (0,h_C]$. Martinez's approach is based on some microlocal weighted estimates combined with the construction of a global escape function associated to the classical Hamiltonian $p$. An alternative approach to conjugated operators was introduced later by Sj\"ostrand and Zworski \cite{Sj-Zw1}.

Now, without any assumption on the set of trapped trajectories (trapping situations), we have the following estimates for any $s>\frac12$ and $E>0$,
\begin{equation}\label{estimation exponentielle du th scalaire} 
 \Vert  R_h(E\pm i0) \Vert_{L^{2,s}(\mathbb R^d)\rightarrow L^{2,-s}(\mathbb R^d)}   \lesssim e^{\frac{C}{h}},
\end{equation}
\begin{equation}\label{estimation du th scalaire intro}
\big\Vert  \mathbbm 1_{\{x\in \mathbb R^d;\vert x\vert\geq R_0\}}R_h(E\pm i0) \mathbbm 1_{\{x\in \mathbb R^d;\vert x\vert\geq R_0\}}   \big\Vert_{L^{2,s}(\mathbb R^d)\rightarrow L^{2,-s}(\mathbb R^d)}  \lesssim  h^{-1},
\end{equation}
for some constants $C,R_0>0$ and $h>0$ small enough. Here $\mathbbm 1_{\{x\in \mathbb R^d;\vert x\vert\geq R_0\}}$ is the characteristic function of the set $\{x\in \mathbb R^d; \vert x\vert\geq R_0\}$. These estimates are originally due to Burq \cite{Burq1} in a more general framework including perturbations of second-order elliptic operators with smooth coefficients, obstacle scattering and metric scattering. Others proofs was given later by Vodev \cite{Vodev} and Sj\"ostrand \cite{Sjostrand}. Cardoso and Vodev \cite{Cardoso} and recently Rodianski and Tao \cite{Tao} established generalizations of these estimates to Schr\"odinger operators on manifolds. Carleman estimates are the main tool in all these works. Recently, Datchev \cite{Datchev} provided an elementary proof of these estimates for scalar Schr\"odinger operators on $\mathbb R^d$, $d\neq 2$ (see \cite{Shapiro} for the case $d=2$). The proof in \cite{Datchev} is based on a global weighted Carleman estimate which has the advantage to be valid under low regularity assumption on the potential $V$ and where the construction of the weight function is simple and explicit. More precisely, Datchev's method only requires that the potential $V$ is bounded on $\mathbb R^d$ together with its radial derivative, and they decay like $(1+\vert x \vert)^{-\rho_0}$ and $(1+\vert x \vert)^{-\rho_0-1}$ at infinity, respectively. 

The $h$-dependence in \eqref{estimation exponentielle du th scalaire} and \eqref{estimation du th scalaire intro}  is optimal in general, that is without any assumption on the underlying classical dynamics. Moreover, estimate \eqref{estimation du th scalaire intro} is not true in general when removing one of the characteristic function $\mathbbm 1_{\{x\in \mathbb R^d;\vert x\vert\geq R_0\}}$ (see \cite{Da-Dy-Zw}).

As a consequence of estimate \eqref{estimation exponentielle du th scalaire}, Burq \cite{Burq1} proved the following resonance free region
$$
{\rm Res} (P(h)) \cap \bigg\{ z\in \mathbb C;\, {\rm Re}\, z \in J \,\, {\rm and} \,\, {\rm Im}\, z\geq -Ce^{-C/h} \bigg\} = \emptyset,
$$
for any compact interval $J\subset (0,+\infty)$ and some constant $C>0$, and $h>0$ small enough.

\subsection{Comments on the matrix-valued case} 

In the matrix-valued case, that is, when the considered operator is of the form \eqref{Opprinc}, the situation is more complicated. Notice that since the eigenvalues are not enough regular in general, the usual definition of the Hamiltonian flow for a matrix-valued Hamiltonian function does not make sense (see \cite{Kato}). The non-trapping condition in this case is usually characterized by the existence of a global escape function associated with the matrix-valued symbol of the operator $P(h)$ at the considered energy-level.

Let $p(x,\xi):= \vert \xi\vert^2 I_N+ V(x)$, $(x,\xi)\in T^*\mathbb R^d$, be the matrix-valued semiclassical symbol of $P(h)$. We denote $\lambda_1(x)\leq \lambda_2(x)\leq \cdots \leq \lambda_N(x)$ the (real) eigenvalues of $V(x)$, $x\in \mathbb R^d$. In general, the functions $x\mapsto\lambda_j(x)$ are continuous on $\mathbb R^d$, $j=1,...,N$. For an energy $E\in \mathbb R$, we denote $\Sigma_E$ the corresponding energy surface defined by
$$
\Sigma_E:=  \bigcup_{j=1}^N\left\{ (x,\xi)\in T^*\mathbb R^d; \, \vert \xi\vert^2+ \lambda_j(x)=E\right\}.
$$
A  smooth real-valued function $G\in C^{\infty}(T^*\mathbb R^{d};\mathbb R)$ is a global escape function associated with the classical Hamiltonian $p$ at $E$ if there exists a constant $c>0$ such that $\{p,G\}_{|\Sigma_{E}}\geq c I_N$ in the sense of hermitian matrices, i.e.,
\begin{equation}\label{la fonction fuite}
\big(\{p,G\}(x,\xi)w,w \big)_{\mathbb C^N}\geq c |w|^2 , \quad \forall (x,\xi)\in \Sigma_{E}, \forall w \in \mathbb C^N.
\end{equation}
Here $\{p,G\}:= \partial_{\xi}p  \cdot \partial_{x}G- \partial_{x}p \cdot \partial_{\xi}G$ denotes the Poisson bracket of $p$ and $G$, and $(\cdot,\cdot)_{\mathbb C^N}$ denotes the inner product in $\mathbb C^N$.

In the scalar case $N=1$, the existence of a global escape function associated with the symbol $p$ at an energy $E>0$ is equivalent to the non-trapping condition (see for instance \cite{Gerard}). In \cite{Jecko1}, under the existence condition of a global escape function, Jecko proved that estimate \eqref{entr} still holds in the case of matrix-valued potentials. With this result at hand, the main challenge consists in the construction of a global escape function which  may be quite complicated. This question has been the subject of many works (see \cite{Jecko2, Jecko1, Jecko3, Clotide} and the references therein) for different type of eigenvalue crossings. 

In this note, we provide generalizations of Burq's estimates \eqref{estimation exponentielle du th scalaire} and \eqref{estimation du th scalaire intro} to the case of Schr\"odinger operators with matrix-valued potentials using the approach developed in \cite{Datchev} and we prove related results on the absence of resonances near the real axis.  We refer to \cite{Assal1} for applications of these results to scattering theory. We also refer to (\cite{Ashida, FMW1, FMW2, FMW3, Hi} and the references therein) for some recent works on the widths of resonances for systems of coupled Schr\"odinger operators for different potentials and at different energy levels.

\section{Statement of the results}\label{Section2} 

Let $\mathcal{H}_N$ denotes the space of $N\times N$ hermitian matrices endowed with the norm $\Vert \cdot\Vert_{N\times N}$, where for $\mathcal{M}\in \mathcal{H}_N$, 
\begin{equation}\label{matrixnorm}
\Vert \mathcal{M} \Vert_{N\times N}:=\sup_{\{w\in \mathbb R^N; |w|\leq 1\}}|\mathcal{M} w|.
\end{equation}

\subsection{Resolvent estimates}  Consider the semiclassical Schr\"odinger operator on $L^2(\mathbb R^d;\mathbb C^N)$, $d\neq 2$,
$$
P(h):= -h^2\Delta \cdot I_N + V(x),
$$
where $I_N$ is the identity $N\times 	N$ matrix and $h>0$ is the semiclassical parameter. Using the polar coordinates $\mathbb R^d \ni x=(r,\omega)\in \mathbb R_+\times \mathbb S^{d-1}$, where $ \mathbb S^{d-1}$ denotes the unit sphere on $\mathbb R^d$, we assume the following conditions on the potential $V$.

\noindent
\textbf{Assumption (A1).} $V: \mathbb R_+\times \mathbb S^{d-1} \rightarrow \mathcal{H}_N$ and its distributional derivative $\partial_r V$ are bounded on $\mathbb R_+\times \mathbb S^{d-1}$, i.e., 
\begin{equation}\label{hyp1}
V,\partial_r V\in   L^{\infty}(\mathbb R_+\times \mathbb S^{d-1};\mathcal{H}_N),
\end{equation}
and has the following long-range behavior at infinity: 

\noindent
\textbf{Assumption (A2).} There exist a constant hermitian matrix $V_{\infty}\in \mathcal{H}_N$ and $\rho_0>0$ such that
\begin{equation}\label{hyp2}
\Vert V(r,\omega) - V_{\infty} \Vert_{N\times N} \leq  (1+r)^{-\rho_0} , \quad \Vert \partial_r V(r,\omega)  \Vert_{N\times N} \leq  (1+r)^{-\rho_0-1},
\end{equation}
for all $(r,\omega)\in \mathbb R_+\times \mathbb S^{d-1}$.

We have the following resolvent estimates.

\begin{theorem}\label{Exponential estimate resolvent} \normalfont
Assume \textbf{(A1)} and \textbf{(A2)}. For any $E>\Vert V_{\infty} \Vert_{N\times N}$ and $s>\frac{1}{2}$, there exist $C,R_0,h_0>0$ such that for all $h\in (0,h_0]$ and $\varepsilon>0$, the following estimates hold
\begin{equation}\label{estimation exponentielle du th}
 \Vert R_h(E\pm i \varepsilon)  \Vert_{L^{2,s}(\mathbb R^d;\mathbb C^N)\rightarrow L^{2,-s}(\mathbb R^d;\mathbb C^N)}   \leq e^{\frac{C}{h}},
\end{equation}
\begin{equation}\label{estimation du th}
\big\Vert   \mathbbm 1_{\{x\in \mathbb R^d;\vert x\vert\geq R_0\}}R_h(E\pm i \varepsilon) \mathbbm 1_{\{x\in \mathbb R^d;\vert x\vert\geq R_0\}}   \big\Vert_{L^{2,s}(\mathbb R^d;\mathbb C^N)\rightarrow L^{2,-s}(\mathbb R^d;\mathbb C^N)}  \leq C h^{-1}.
\end{equation}
Here $\mathbbm 1_{\{x\in \mathbb R^d;\vert x\vert\geq R_0\}}$ is the characteristic function of the set $\{x\in \mathbb R^d;\vert x\vert\geq R_0\}$.
\end{theorem}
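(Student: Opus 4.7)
The plan is to adapt the global weighted Carleman argument of Datchev \cite{Datchev} to the matrix-valued setting, using in an essential way the hermiticity of $V$ and the threshold assumption $E>\Vert V_\infty\Vert_{N\times N}$. First I would pass to polar coordinates $x=r\omega\in \mathbb R_+\times \mathbb S^{d-1}$ and conjugate $P(h)$ with $r^{(d-1)/2}$ to eliminate the radial first-order term; this reduces the problem to the operator
$$
\widetilde P(h) = -h^2\partial_r^2 + \tfrac{h^2}{r^2}\bigl(-\Delta_\omega + c_d\bigr) I_N + V(r,\omega), \qquad c_d := \tfrac{(d-1)(d-3)}{4},
$$
acting on $L^2(\mathbb R_+\times\mathbb S^{d-1};\mathbb C^N)$. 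The hypothesis $d\neq 2$ guarantees $c_d\geq 0$, so the angular curvature term has the correct sign for the subsequent estimates. I would then construct, exactly as in \cite{Datchev}, a bounded piecewise-smooth weight $\varphi(r)\in W^{1,\infty}(\mathbb R_+)$ that is increasing on a compact interval $[0,R_0]$ (with $R_0$ large enough that the long-range regime dominates) and decreasing at infinity (of the form $-M\langle r\rangle^{-\delta}$ up to rescaling) with $\Vert \varphi\Vert_\infty \leq C$. Boundedness of $\varphi$ produces the factor $e^{C/h}$, while the flatness of $\varphi$ outside $\{r\leq R_0\}$ is what will give the polynomial bound in the truncated estimate.

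Writing $v = e^{\varphi/h} u$ and $L v = e^{\varphi/h}(\widetilde P(h) - E \mp i\varepsilon)e^{-\varphi/h} v$, a direct conjugation yields
$$
L v = -h^2 \partial_r^2 v + 2h\varphi'\partial_r v - (\varphi')^2 v + h\varphi'' v + \tfrac{h^2}{r^2}(c_d - \Delta_\omega) v + (V - E I_N \mp i\varepsilon I_N) v.
$$
Testing this against $\partial_r v$ and against $v$ in the $L^2(\mathbb R_+\times\mathbb S^{d-1};\mathbb C^N)$ inner product, taking real and imaginary parts, and integrating by parts in $r$ and $\omega$ produces a global Morawetz--Carleman type identity. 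The key matrix-valued point is that, by hermiticity of $V$,
$$
2\,\mathrm{Re}\langle V v,\partial_r v\rangle_{\mathbb C^N} = \partial_r \langle V v, v\rangle_{\mathbb C^N} - \langle (\partial_r V) v, v\rangle_{\mathbb C^N},
$$
so after integration the potential contributes the real hermitian form $-\langle (\partial_r V)v,v\rangle$, which is pointwise controlled by $\Vert \partial_r V\Vert_{N\times N}\leq (1+r)^{-\rho_0-1}$ from \textbf{(A2)}. Likewise $\mathrm{Im}\langle V v, v\rangle \equiv 0$. The assumption $E>\Vert V_\infty\Vert_{N\times N}$ then ensures $E I_N - V(r,\omega)$ is a uniformly positive definite hermitian matrix for $r$ large, which is the correct matrix analogue of the scalar sign condition $E-V(x)>0$.

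Combining the two identities and absorbing the lower-order contributions via the long-range decay of $V-V_\infty$ and the size of $\varphi'$ yields a global weighted Carleman inequality
\begin{equation*}
h\Vert\langle r\rangle^{-s} v\Vert^2 + h^2\Vert\langle r\rangle^{-s}\nabla_\omega v\Vert^2 \leq C h^{-1}\Vert\langle r\rangle^{s} e^{\varphi/h}(P(h) - E\mp i\varepsilon)u\Vert^2,
\end{equation*}
uniformly for $\varepsilon>0$ and $h$ small (the $\varepsilon$ term is absorbed for the appropriate choice of sign in $\mp$). Undoing the polar conjugation by $r^{(d-1)/2}$ and bounding $e^{\pm\varphi/h}\leq e^{C/h}$ yields \eqref{estimation exponentielle du th}. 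For \eqref{estimation du th} one chooses $\varphi$ to be essentially flat on $\{r\geq R_0\}$; cutting both sides by $\mathbbm 1_{\{|x|\geq R_0\}}$ then removes the exponential prefactor and leaves the clean $\mathcal O(h^{-1})$ bound.

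The main obstacle is the matrix Carleman computation itself: the positivity that drives the scalar argument is a pointwise scalar inequality, whereas here one has to manipulate hermitian quadratic forms on $\mathbb C^N$ throughout and ensure that $E I_N - V$ remains uniformly positive definite on the region where $\varphi'\neq 0$, while controlling the off-diagonal couplings from $V-V_\infty$ through the decay rates in \eqref{hyp2}. Once the hermiticity of $V$ is used systematically to convert all cross-terms into real hermitian forms controlled by \textbf{(A2)}, the scalar construction of the weight function from \cite{Datchev} transfers almost verbatim.
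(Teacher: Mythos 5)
Your proposal follows essentially the same route as the paper: pass to polar coordinates and conjugate by $r^{(d-1)/2}$, construct a bounded monotone radial weight $\varphi$ flat outside a compact set, and derive a global weighted Carleman estimate by integrating a Morawetz-type functional in $r$, with hermiticity of $V$ converting cross-terms into real quadratic forms controlled by Assumption (A2) and with $E>\Vert V_\infty\Vert_{N\times N}$ supplying the positivity of $E I_N-V_\varphi$. This is precisely the Datchev scheme the paper adapts (Proposition~\ref{prop Carleman} and the concluding density argument), so your plan is the paper's own proof up to presentational details. Two small points to tighten: your description of $\varphi$ as "decreasing at infinity'' conflicts with the needed sign condition $m\varphi'\geq 0$ and with the flatness outside $\{r\leq R_0\}$ that you yourself invoke for the truncated bound; the paper takes $\varphi$ nondecreasing with $\mathrm{supp}\,\varphi'=[0,R_0]$, i.e. exactly constant beyond $R_0$. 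Also, the $\varepsilon$-term is not simply absorbed by a sign choice; the paper keeps an explicit $C\varepsilon h\Vert e^{\varphi/h}v\Vert^2_{L^2}$ remainder in the Carleman inequality and controls it afterwards via $2\varepsilon\Vert v\Vert^2=\mp 2\,\mathrm{Im}\,\langle(P-(E\pm i\varepsilon))v,v\rangle$, which works uniformly for both signs.
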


\begin{remark}\normalfont
Estimates on resolvent truncated outside a large compact set are important in scattering theory. Indeed, the operator $\mathbbm 1_{\{x\in \mathbb R^d;\vert x\vert\geq R_0\}}R_h(z) \mathbbm 1_{\{x\in \mathbb R^d;\vert x\vert\geq R_0\}}$ appears for instance in the representation of the scattering amplitude for compactly supported perturbations (see \cite{Petkov}). In \cite{Assal1}, we apply \eqref{estimation du th} to prove estimates on the scattering amplitude for Schr\"odinger operators with matrix-valued potentials.

\end{remark}
\subsection{Resonance free domains} Now, we state our results on the resonances of $P(h)$. To define the resonances of $P(h)$, we need the following assumption on the potential $V$. 

\noindent
\textbf{Assumption ($\textbf{Hol}_{\infty}$).} $V\in C^{\infty}(\mathbb R^d;\mathcal{H}_N)$ and extends to an analytic function on $\mathcal{S} \subset \mathbb C^d$
\begin{equation}\label{hyp holomorph}
\mathcal{S} := \{x\in \mathbb C^d; \, |{\rm Im}\, x|\leq c_0\, \langle {\rm Re}\, x\rangle, \; |{\rm Re}\, x|>\kappa\},
\end{equation}
for some constants $c_0, \kappa>0$. Moreover, there exist $\rho_0>0$ and a constant $C>0$ such that for all $x\in  \mathcal{S}$ 
\begin{equation}\label{decroissance a l'infini}
\Vert V(x)-V_{\infty}\Vert_{N\times N} \leq C \langle x \rangle^{-\rho_0}, \quad \langle x \rangle:= (1+ \vert x \vert^2)^{\frac12}.
\end{equation}

Under this assumption, we can define the resonances of $P(h)$ near the real axis by the method of complex distortion as it was done in \cite{Hunziker, Nedelec} (see also \cite{Aguilar, Sj-Zw} for an alternative approach). Let $A\gg 1$ be a large constant, and let $F:\mathbb R^d \rightarrow \mathbb R^d$ be a smooth vector-field such that 
\begin{equation}\label{vector field}
F(x)= \left\{ \begin{array}{lll}
0 \;\; {\rm for} \; |x|\leq A \\
x    \;\; {\rm for} \; |x|\geq A+1.
\end{array}\right.
\end{equation}
We introduce the one-parameter family of unitary distortion 
$$
C_0^{\infty}(\mathbb R^d;\mathbb C^N) \ni f \longmapsto  U_{\omega} f(x) :=  \vert J_{\phi_{\omega}(x)}\vert^{\frac{1}{2}} f(\phi_{\omega}(x)) , \quad \omega \in \mathbb R,
$$
where $\phi_{\omega}(x):=x+\omega F(x)$ and $J_{\phi_{\omega}(x)}:= \det(1+\omega \nabla F(x))$ is the Jacobian of $\phi_{\omega}(x)$.

For $\omega\in \mathbb R$ small enough, $U_{\omega}$ extends to a unitary operator on $L^2(\mathbb R^d;\mathbb C^N)$. We define
\begin{equation}\label{distorted operator}
P_{\omega}(h) := U_{\omega} P(h) (U_{\omega})^{-1} .
\end{equation}
Under the assumption \textbf{($\textbf{Hol}_{\infty}$)}  on the potential $V$, the operator $P_{\omega}(h)$ is a differential operator with analytic coefficients with respect to $\omega$, and can therefore be continued in a unique way to small enough complex values of $\omega$. Therefore, the distorted operator  
\begin{equation}\label{Distorted operator}
P_{\theta}(h):=U_{i\theta} P(h) (U_{i\theta})^{-1}
\end{equation}
is well defined for $\theta>0$ small enough. By Weyl perturbation theorem its essential spectrum is given by 
$$
\sigma_{\text{ess}}(P_{\theta}(h)) = \bigcup_{j=1}^{N} (\lambda_{j} + e^{-2i\theta} \mathbb R_+).
$$
Hence, the spectrum of $P_{\theta}(h)$ in the complex sector
$$
S_{\theta} := \left(\lambda_{1}+ e^{-2i[0,\theta)}\mathbb R^*_+\right) \setminus \left(\cup_{j=1}^N (\lambda_{j}+e^{-2i\theta}\mathbb R_+) \right)
$$
is discrete, consists on isolated eigenvalues with finite multiplicities. Moreover, standard arguments (see e.g. \cite{Helffer}) show that $\sigma(P_{\theta}(h))\cap S_{\theta}$ does not depends on the particular choice of the vector-field $F$ and for any $0<\theta<\theta'$ small enough and $h>0$ fixed, we have 
$$
\sigma(P_{\theta}(h)) \cap S_{\theta} = \sigma(P_{\theta'}(h)) \cap S_{\theta}.
$$

The resonances of $P(h)$ in $S_{\theta}$ are defined as the eigenvalues of $P_{\theta}(h)$ in $S_{\theta}$, or equivalently as the eigenvalues of $P_{\theta'}(h)$ in $S_{\theta}$ for all $0<\theta<\theta'$ small enough. In the following, we denote ${\rm Res}\, (P(h)) $ the set of resonances of $P(h)$. 

The estimate \eqref{estimation exponentielle du th} entails the following result on the absence of resonances in an exponentially small band below $(\Vert V_{\infty} \Vert_{N\times N}, +\infty)$. 

\begin{theorem}\label{exponential resonance free region} \normalfont
Assume \textbf{($\textbf{Hol}_{\infty}$)}. For any compact interval $J \subset (\Vert V_{\infty}\Vert_{N\times N},+\infty)$, there exist a constant $C>0$ and $h_0\in (0,1]$ such that for all $h\in (0,h_0]$, we have
$$
{\rm Res} (P(h)) \cap \left\{ z\in \mathbb C;\, {\rm Re}\, z \in J \,\, {\rm and} \,\, {\rm Im}\, z\geq -Ce^{-C/h} \right\} = \emptyset.
$$
\end{theorem}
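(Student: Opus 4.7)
The plan is to deduce the exponential resonance-free strip from the exponential resolvent bound \eqref{estimation exponentielle du th} of Theorem~\ref{Exponential estimate resolvent}, following the scheme pioneered by Burq in the scalar case. The central mechanism is the complex distortion of Section~\ref{Section2}: resonances of $P(h)$ in $S_\theta$ coincide with eigenvalues of $P_\theta(h)$, and since $F \equiv 0$ on $\{|x| \leq A\}$ by \eqref{vector field}, the operators $P_\theta(h)$ and $P(h)$ agree identically on that ball. Consequently, for any $\chi \in C_c^\infty(\mathbb R^d;\mathbb R)$ with $\mathrm{supp}\,\chi \subset \{|x| < A\}$ and any $u,v \in L^2$, the standard distortion identity
\[
\langle R_h(z)(\chi u), \chi v \rangle \;=\; \langle R_\theta(z)(\chi u), \chi v\rangle, \qquad \mathrm{Im}\, z > 0,
\]
provides a meromorphic continuation of the left-hand side through the real axis into $S_\theta$, whose poles are precisely the resonances of $P(h)$.

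\textbf{Assembling bounds on the boundary.} I would then assemble bounds on this meromorphic extension along the boundary of a complex rectangle $\Omega$ whose top edge lies slightly above the compact interval $J$ and whose bottom edge sits at a small fixed depth. On a neighborhood of $J$ in the real axis, Theorem~\ref{Exponential estimate resolvent} yields
\[
\|R_h(E \pm i\varepsilon)\|_{L^{2,s}\to L^{2,-s}} \;\leq\; e^{C/h}
\]
uniformly in $\varepsilon > 0$ and for $E$ in a compact subinterval of $J$ (the uniformity in $E$ follows by inspection of the proof of Theorem~\ref{Exponential estimate resolvent}). On the upper edge $\mathrm{Im}\, z = \delta > 0$ and its vertical extensions into the upper half-plane, the trivial self-adjoint bound $\|R_h(z)\|_{L^2\to L^2} \leq 1/\mathrm{Im}\, z$ applies. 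Finally, one needs an a priori $e^{C/h}$ upper bound on $\langle R_\theta(z)\chi,\chi\rangle$ throughout $\Omega$, away from the finitely many poles; this comes from Fredholm theory applied to $P_\theta(h) - z$ combined with a Weyl-type polynomial-in-$h^{-1}$ bound on the number of resonances of $P(h)$ in any fixed complex neighborhood, both of which carry over from the scalar case using \textbf{(Hol$_\infty$)} and the relative compactness of $V - V_\infty$.

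\textbf{Semiclassical maximum principle.} With these ingredients, the semiclassical maximum principle of Tang--Zworski type (a Phragm\'en--Lindel\"of argument applied to the holomorphic function $f(z) := \langle R_\theta(z)\chi,\chi\rangle \cdot \prod_{z_j \in \Omega \cap \mathrm{Res}(P(h))}(z-z_j)$, with $\chi$ chosen so that the corresponding eigenfunction has nontrivial scalar product with $\chi$) propagates the boundary estimates into $\Omega$. Combined with the residue representation near any pole $z_0$ of $R_\theta$, which gives $|\langle R_\theta(z)\chi,\chi\rangle| \gtrsim |z-z_0|^{-1}$, this forces any resonance in the strip $\{\mathrm{Re}\, z \in J,\, \mathrm{Im}\, z < 0\}$ to satisfy $|\mathrm{Im}\, z| \geq e^{-C'/h}$ for a suitable $C' > 0$, yielding the theorem.

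\textbf{Main obstacle.} The conceptually new ingredient is Theorem~\ref{Exponential estimate resolvent}; the rest is architectural. The one point requiring matrix-specific care is the polynomial-in-$h^{-1}$ bound on the number of resonances in $\Omega$, needed to control the degree of the product $\prod(z-z_j)$. This is handled by writing $P_\theta(h) - z$ as a perturbation of the free-type operator $-h^2(1+i\theta)^{-2}\Delta \cdot I_N + V_\infty$ by the relatively trace-class (after resolvent multiplication) matrix-valued perturbation $V - V_\infty$, whose operator bounds follow from the decay \eqref{decroissance a l'infini}; the matrix structure contributes only an overall factor of $N$ and does not alter the $h$-scaling obtained in the scalar case.
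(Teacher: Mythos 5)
Your proposal is architecturally sound and would produce the theorem, but it follows Burq's original complex-analytic scheme rather than the route taken in the paper. What the paper actually does: it first establishes Lemma~\ref{Lem}, an $e^{C/h}$ upper bound on the distorted resolvent $(P_\theta(h)-z)^{-1}$ (with $\theta = h|\ln h|$) uniformly for $z$ in a strip $[\alpha,\beta]-i[0,\eta h|\ln h|]$. This is obtained by introducing the auxiliary operator $\mathcal{A}(h) = -h^2\Delta\cdot I_N + (1-\psi(x/r))V(x)$, which for $r$ large is non-trapping since $x\cdot\xi$ is an escape function for its symbol; the non-trapping distorted-resolvent estimate \eqref{distortion non cap} (a by-product of the conjugated-operator proof of Theorem~\ref{region logarithmique}) controls $(\mathcal{A}_\theta-z)^{-1}$ polynomially in $h^{-1}$, and a two-step resolvent identity then writes $(P_\theta-z)^{-1}$ in terms of $(\mathcal{A}_\theta-z)^{-1}$ and a compactly truncated resolvent $\chi(P(h)-z)^{-1}\chi$, the latter controlled by \eqref{estimation exponentielle du th}. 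The final step is a bare Neumann series: for ${\rm Re}\, z\in J$ and $|{\rm Im}\, z| < C^{-1}e^{-C/h}$ one writes $P_\theta-z = (P_\theta-{\rm Re}\, z)\bigl(I - i\,{\rm Im}\, z\,(P_\theta-{\rm Re}\, z)^{-1}\bigr)$ and invokes Lemma~\ref{Lem} at the real point ${\rm Re}\, z$. Compared with this, your approach dispenses with Theorem~\ref{region logarithmique} and the auxiliary operator entirely — a cleaner dependency graph — but pays for it by requiring a polynomial-in-$h^{-1}$ bound on the number of resonances, a Phragm\'en--Lindel\"of argument with a pole-cancelling factor, and a lower bound on the pairing $\langle R_\theta(z)\chi,\chi\rangle$ near a pole. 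Two of these points are more delicate than your sketch suggests: the $e^{C/h}$ a priori bound away from poles is not a consequence of "Fredholm theory plus a separate Weyl count" but is usually extracted, together with the count, from a single Fredholm-determinant / Jensen-formula analysis; and ensuring that the chosen $\chi$ sees the resonant state (so that the residue pairing is nonzero) needs a unique-continuation argument. Neither is fatal, but these are exactly the technicalities the paper's Neumann-series shortcut sidesteps.
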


\begin{remark} \normalfont
Notice that assumption \textbf{($\textbf{Hol}_{\infty}$)} and the Cauchy formula imply assumptions \textbf{(A1)} and \textbf{(A2)}.
\end{remark}

In our proof of Theorem \ref{exponential resonance free region}, we shall use the following result which generalizes the well-known result of \cite{Martinez1} to the case of Schr\"odinger operators with matrix-valued potentials. 

\begin{theorem}\label{region logarithmique} \normalfont
Assume \textbf{($\textbf{Hol}_{\infty}$)}. Let $E_0>\Vert V_{\infty}\Vert_{N\times N}$ and suppose that there exists $G\in C^{\infty}(T^*\mathbb R^d;\mathbb R)$ such that \eqref{la fonction fuite} holds on $\Sigma_{E_0}$. Then there exists $\varepsilon_0>0$ such that for all $C>0$, there exists $h_C\in (0,1]$ such that for all $0<h\leq h_C$, we have
$$
{\rm Res} (P(h)) \cap  \bigg\{ z\in \mathbb C;\, {\rm Re}\, z \in [E_0-\varepsilon_0,E_0+\varepsilon_0] \,\, {\rm and} \,\, {\rm Im}\, z\geq -C h \vert \ln h \vert  \bigg\} = \emptyset.
$$
\end{theorem}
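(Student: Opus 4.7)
The plan is to follow Martinez's conjugation method \cite{Martinez1}, adapting it to the matrix-valued setting. I argue by contradiction: assume there exist $h_n \to 0^+$ and $z_n \in \mathrm{Res}(P(h_n))$ with $\mathrm{Re}\, z_n \to \widetilde E \in [E_0-\varepsilon_0, E_0+\varepsilon_0]$ and $\mathrm{Im}\, z_n \geq -C h_n |\ln h_n|$. Fix $\theta > 0$ small and independent of $h$ so that $z_n \in S_\theta$; then there are normalized $u_n \in L^2$ with $P_\theta(h_n) u_n = z_n u_n$. By continuity of the eigenvalues $\lambda_j(x)$ of $V(x)$ and compactness of the relevant piece of phase space, the hypothesis $\{p,G\}\geq c I_N$ on $\Sigma_{E_0}$ extends to $\{p,G\}\geq (c/2)I_N$ on $\Sigma_{E}$ for every $E$ with $|E-E_0|\leq \varepsilon_0$, after possibly shrinking $\varepsilon_0$.

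Next, I truncate $G$ to a compactly supported escape function: choose $\widetilde G \in C_c^\infty(T^*\mathbb R^d;\mathbb R)$ with $\widetilde G = G$ on a large phase-space compact $K$, so that $\{p,\widetilde G\} \geq c' I_N$ on a neighborhood of $\Sigma_E \cap K$. Consider the conjugated distorted operator
$$
Q_{t,\theta}(h) := e^{t \widetilde G^w(x,hD)/h}\, P_\theta(h)\, e^{-t \widetilde G^w(x,hD)/h}, \qquad t := \alpha h |\ln h|,
$$
for $\alpha > 0$ to be fixed large. Since $\widetilde G \in C_c^\infty$, the operators $e^{\pm t \widetilde G^w/h}$ are bounded on $L^2$ with norm $\lesssim h^{-\alpha \|\widetilde G\|_\infty}$, and $Q_{t,\theta}(h_n) v_n = z_n v_n$ with $v_n := e^{t \widetilde G^w/h_n} u_n \neq 0$. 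A pseudodifferential commutator expansion gives
$$
Q_{t,\theta}(h) = P_\theta(h) + i t\,\{p_\theta, \widetilde G\}^w + \mathcal O_{L^2\to L^2}(t^2/h).
$$
Since $\{p_\theta, \widetilde G\}^w$ is self-adjoint (its symbol is hermitian matrix-valued), taking imaginary parts in $\langle Q_{t,\theta}(h_n) v_n, v_n\rangle = z_n \|v_n\|^2$ produces the identity
$$
(\mathrm{Im}\, z_n)\|v_n\|^2 = \mathrm{Im}\,\langle P_\theta(h_n) v_n, v_n\rangle + t\,\langle \{p,\widetilde G\}^w v_n, v_n\rangle + \mathcal O(\theta t + t^2/h_n)\|v_n\|^2.
$$

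Two positivity mechanisms now drive the contradiction. The complex distortion yields $-\mathrm{Im}\langle P_\theta(h_n)v_n,v_n\rangle \geq -\mathcal O(h_n)\|v_n\|^2$, while a hermitian matrix-valued sharp G\aa{}rding inequality applied to $\{p,\widetilde G\}-c'I_N$ (nonnegative on a neighborhood of $\Sigma_E \cap K$), combined with a microlocal cutoff $\chi^w$ equal to $1$ near $\Sigma_E \cap K$, gives
$$
\langle \{p,\widetilde G\}^w v_n, v_n\rangle \geq c'\|\chi^w v_n\|^2 - \mathcal O(h_n)\|v_n\|^2.
$$
The off-characteristic mass $\|(1-\chi^w)v_n\|$ is absorbed using microlocal ellipticity of $P_\theta(h_n)-z_n$ off $\Sigma_E$ (in the matrix sense), which together with the eigenvalue equation produces $(1-\chi^w) v_n = \mathcal O(h_n^\infty)\|v_n\|$. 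Inserting into the main identity and dividing by $h_n |\ln h_n|\|v_n\|^2$ yields $-C \geq c'\alpha + o_{h_n \to 0}(1)$, and choosing $\alpha > C/c'$ contradicts the assumption. The main obstacle is precisely the matrix-valued sharp G\aa{}rding inequality paired with the microlocal inversion of $P_\theta - z$ off $\Sigma_E$: the eigenvalues $\lambda_j(x)$ of $V(x)$ are only continuous and may cross, so $\Sigma_E$ is a union of possibly singular sheets and smooth spectral projections are unavailable. G\aa{}rding must therefore be applied directly to the hermitian matrix-valued symbol $\{p,\widetilde G\}$, and the microlocal inverse of $p_\theta - z$ must be built by a Schur-type construction that avoids smoothness of the individual eigenvalues.
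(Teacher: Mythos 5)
The overall strategy you propose---conjugation by the exponential of a bounded escape-function pseudodifferential operator, combined with complex distortion, G\aa{}rding's inequality on the characteristic variety, and ellipticity elsewhere---is indeed the strategy of the paper, which follows \cite{Sj-Zw1} and \cite{Martinez1}. However, two concrete points make your version break as written.

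First, and most importantly, you truncate $G$ to an arbitrary $\widetilde G\in C_c^\infty$ without relating the truncation to the distortion vector field $F$. The paper's construction is more structured: it first arranges (using \textbf{($\textbf{Hol}_\infty$)}) that $G(x,\xi)=x\cdot\xi$ for $|(x,\xi)|$ large, then sets $\mathcal G:=G-F(x)\cdot\xi\in C_c^\infty$, where $F$ is exactly the distortion vector field. This ensures that after conjugating the distorted operator (with angle $\theta_1\sim M h|\ln h|$), the imaginary part of the resulting symbol is $-M\kappa(h)\{p,\mathcal G+F\cdot\xi\}+\mathcal O(M^2\kappa(h)^2)=-M\kappa(h)\{p,G\}+\mathcal O(M^2\kappa^2)$, so it is \emph{uniformly negative on all of} $\Sigma_{I_{\varepsilon_0}}$: the compactly supported $\mathcal G$ handles the bounded region, $F\cdot\xi$ handles infinity, and they glue without a gap. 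Your $\widetilde G$ does not interface with $F$, so in the transition annulus where $\widetilde G\ne G$ and $F$ is not yet equal to $x$, neither $t\{p,\widetilde G\}$ nor $\operatorname{Im}p_\theta$ provides the needed sign, and the conjugated symbol's imaginary part can have the wrong sign there. You appeal to ``microlocal ellipticity off $\Sigma_E$'' to absorb the off-characteristic mass, but the region you actually need to control is off $\Sigma_E\cap K$, and on $\Sigma_E\setminus K$ the operator is elliptic only thanks to $\operatorname{Im}p_\theta$ from the distortion---which precisely fails in the transition annulus.

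Second, fixing $\theta$ independent of $h$ (rather than $\theta(h)=Mh|\ln h|$ as in the paper) makes the term $\operatorname{Im}\langle P_\theta v_n,v_n\rangle$ potentially of size $\theta\|v_n\|^2$, i.e.\ a fixed constant, whereas the useful term is $t\,c'=\mathcal O(h|\ln h|)\to 0$. Your stated bound ``$-\operatorname{Im}\langle P_\theta v_n,v_n\rangle\ge-\mathcal O(h_n)\|v_n\|^2$'' is an upper bound on $\operatorname{Im}\langle P_\theta v_n,v_n\rangle$ and is thus in the wrong direction for the identity you write; a lower bound would require $\operatorname{WF}_h(v_n)\subset\{F=0\}$, which is incompatible with needing the escape mechanism to reach the region where $F\ne 0$. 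The paper's choice $\theta_1(h)\sim h|\ln h|$ avoids this entirely, since both contributions to $\operatorname{Im}\widetilde p_{\theta_1}$ are of the same order $M\kappa(h)$. Two smaller issues: the commutator remainder is $\mathcal O(t^2)$, not $\mathcal O(t^2/h)$ (the latter would swamp the main term $t c'$, since $t/h=\alpha|\ln h|\to\infty$); and with your sign of conjugation $e^{+t\widetilde G^w/h}$ and $\{p,G\}>0$, the conjugation pushes $\operatorname{Im}q$ \emph{up}, whereas the concluding inequality ``$-C\ge c'\alpha$'' you derive corresponds to the opposite sign, so the bookkeeping does not close. The paper instead proves an operator lower bound $\|(\widetilde P_{\theta_1}-z)u\|\gtrsim\kappa(h)\|u\|$ via a quadratic partition of unity $\Psi_1^2+\Psi_2^2=\mathrm{Id}+\mathcal O(h^\infty)$, G\aa{}rding on $\operatorname{supp}\psi_1\subset\Sigma_{I_{\varepsilon_0}}$, and ellipticity on $\operatorname{supp}\psi_2$, which both fixes the signs and yields the quantitative resolvent estimate of Lemma~\ref{Lemma non-captif} as a bonus.
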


\begin{remark} \normalfont
In the scalar case $N=1$, a quantitative version of the previous result in terms of an estimate on the distorted resolvent was proved in \cite{Nakamura1} (see also \cite{Sj-Zw1}). More precisely, if $P_{\theta}(h)$ denotes the distorted operator with $\theta=\theta(h)=C' h \vert \ln h\vert$, $C'\gg C$, then for any $C>0$ there exist a constant $C''>0$ and $h_C>0$ such that 
\begin{equation}\label{quantitative estimate}
\Vert (P_{\theta}(h)-z)^{-1}\Vert \leq \frac{C''}{h} \exp(C''\vert {\rm Im}\,z \vert/h), 
\end{equation}
uniformly for $z\in  \{ z\in \mathbb C;\, {\rm Re}\, z \in [E_0-\varepsilon_0,E_0+\varepsilon_0] \,\, {\rm and} \,\, {\rm Im}\, z\geq -C h \vert \ln h \vert  \}$ and $h\in (0,h_C]$. 

In Lemma \ref{Lemma non-captif}, we prove a weaker estimate (see \eqref{distortion non cap}) on the distorted resolvent which coincides with \eqref{quantitative estimate} for $ {\rm Im}\, z = C h \vert \ln h \vert$. As in \cite{Nakamura1}, using a version of the non-trapping estimate and  the semiclassical maximum principle, we can prove the same estimate \eqref{quantitative estimate} for our matrix-valued operator. This will be proved in details in \cite{Assal1}.
\end{remark}

\section{Resolvent estimates}

In this section, we present the main ideas of the proof of Theorem \ref{Exponential estimate resolvent} referring to \cite{Assal1} for the details. We follow the approach developed in \cite{Datchev} which we adapt in our context of matrix-valued operator. The main step in the proof of estimates \eqref{estimation exponentielle du th} and \eqref{estimation du th} is the following global weighted Carleman estimate. In the following $s>\frac{1}{2}$ and $E> \Vert V_{\infty} \Vert_{N\times N}$ are fixed.

\begin{proposition}\label{prop Carleman}
There exist $R_0,h_0,C>0$ and a positive radial function $\varphi=\varphi(r)\in C^{\infty}(\mathbb R_+;\mathbb R_+)$ with $\varphi'\geq 0$ and ${\rm supp}\, \varphi' = [0,R_0]$ such that
the following estimate holds
\begin{equation}\label{Global Carleman estimate}
h^2\big\Vert  e^{\varphi/h} v \big\Vert^2_{L^{2,-s}(\mathbb R^d;\mathbb C^N)} \leq C  \big\Vert  e^{\varphi/h} \big(P(h)-(E+ i\varepsilon) I_N \big)v \big\Vert^2_{L^{2,s}(\mathbb R^d;\mathbb C^N)} + C\varepsilon h\big\Vert e^{\varphi/h}v \big\Vert^2_{L^2(\mathbb R^d;\mathbb C^N)},
\end{equation} 
for all $\varepsilon\geq 0$, $h\in (0,h_0]$ and $v\in C_0^{\infty}(\mathbb R^d;\mathbb C^N)$. 
\end{proposition}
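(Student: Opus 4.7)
The plan is to adapt Datchev's global weighted Carleman estimate from the scalar case \cite{Datchev} to the matrix setting, leveraging the fact that both $V$ and $\partial_r V$ are hermitian so that all pointwise scalar inequalities from the scalar proof have direct matrix-valued analogues via the spectral theorem for hermitian matrices. I would work in polar coordinates $x=r\omega$ with $(r,\omega)\in\mathbb R_+\times\mathbb S^{d-1}$ and perform the unitary change $u(r,\omega)=r^{(d-1)/2}v(r\omega)$ sending $L^2(\mathbb R^d;\mathbb C^N)$ to $L^2(\mathbb R_+\times\mathbb S^{d-1};\mathbb C^N)$, which turns the radial part of $-h^2\Delta$ into $-h^2\partial_r^2 + h^2\frac{(d-1)(d-3)}{4r^2}$ and makes the coefficient $(d-1)(d-3)\geq 0$ precisely when $d\neq 2$.

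Next I pick $\varphi\in C^\infty(\mathbb R_+;\mathbb R_+)$ with $\varphi'\geq 0$ supported in $[0,R_0]$ (for $R_0$ to be chosen), e.g.\ $\varphi'$ equal to a large constant $M$ on a large portion of $[0,R_0]$ and smoothly tapered near the endpoints. Setting
\begin{equation*}
L_\varphi := e^{\varphi/h}\bigl(P(h)-(E+i\varepsilon)I_N\bigr)e^{-\varphi/h},
\end{equation*}
I would decompose $L_\varphi = A + iB$ into its formally self-adjoint and skew-adjoint parts on $L^2(\mathbb R_+\times\mathbb S^{d-1},dr\,d\omega;\mathbb C^N)$, where $A$ collects $-h^2\partial_r^2-(\varphi')^2-h^2\Delta_\omega/r^2 + h^2(d-1)(d-3)/(4r^2) + V(r,\omega) - E\,I_N$, while $iB$ collects the first-order radial drift $2h\varphi'\partial_r + h\varphi''$ together with $-i\varepsilon I_N$. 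The key identity is
\begin{equation*}
\Vert L_\varphi u\Vert^2 = \Vert Au\Vert^2 + \Vert Bu\Vert^2 + \bigl(i[A,B]u,u\bigr),
\end{equation*}
and the strategy reduces to showing that $i[A,B]$ dominates a positive multiple of $h^2\langle r\rangle^{-2s} I_N$ as an operator-valued quadratic form, modulo absorbable errors and an $\varepsilon$-dependent contribution.

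The matrix contribution to $i[A,B]$ collapses to $-2h\varphi'\partial_r V$, which is hermitian and bounded in norm by $2h\varphi'(1+r)^{-\rho_0-1}$ thanks to assumption (A2); this is small enough to be absorbed once $R_0$ is taken large. The scalar part of $i[A,B]$ produces the same positive contribution as in Datchev's argument, proportional to $h$ times a function of $\varphi'$ and supported in $[0,R_0]$, which is made to dominate by choosing $M$ large. Outside $[0,R_0]$, where $\varphi'=0$ and the conjugation is trivial, positivity is extracted from the fact that $E\,I_N - V(r,\omega)\geq (E - \Vert V_\infty\Vert_{N\times N})I_N + O(r^{-\rho_0})$ is positive-definite uniformly for $r\geq R_0$ with $R_0$ large, combined with the standard radial commutator trick $i[-h^2\partial_r^2,-2i\chi\partial_r]\sim h\chi$ for suitable radial cut-offs $\chi$. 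The cross term $-2\varepsilon\operatorname{Im}(A'u,u)$ arising when $\varepsilon\neq 0$, where $A'$ is the self-adjoint part of $e^{\varphi/h}(P(h)-E)e^{-\varphi/h}$, yields the $C\varepsilon h\Vert e^{\varphi/h}v\Vert^2_{L^2}$ remainder, and undoing the unitary polar reduction recovers the estimate in the original coordinates with the weights $\langle x\rangle^{\pm s}$.

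The main obstacle is to carry out the matrix-valued commutator calculus in a way that preserves the scalar Datchev argument line by line. This is ultimately enabled by the hermiticity of $V$ and $\partial_r V$: pointwise they admit real spectral decompositions, and their operator norms dominate all matrix entries, so identities like $\bigl(V(r,\omega)w,w\bigr)_{\mathbb C^N}\in\mathbb R$ and $|\bigl(\partial_r V w,w\bigr)_{\mathbb C^N}|\leq \Vert\partial_r V\Vert_{N\times N}|w|^2$ let the Datchev inequalities go through mutatis mutandis, with the scalar hypothesis $E>V_\infty$ replaced by $E>\Vert V_\infty\Vert_{N\times N}$ which furnishes the positivity at infinity. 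The full technical details of the weight construction and the algebraic manipulations follow the scalar proof \cite{Datchev} and will be spelled out in \cite{Assal1}.
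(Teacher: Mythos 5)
Your outline does not follow the paper's (or Datchev's) actual argument, even though you attribute it to \cite{Datchev}. Both the paper and Datchev use an \emph{energy method}: after conjugating to $\mathcal P_\varphi$ in polar coordinates they introduce a scalar radial functional $\mathcal L_h(r)=\Vert h u'\Vert^2_{\mathbb S^{d-1}}-\langle(\mathcal Q\cdot I_N+V_\varphi-EI_N)u,u\rangle_{\mathbb S^{d-1}}$, multiply by the crucial spatial weight $m(r)=1-(1+r)^{1-2s}$, compute $(m\mathcal L_h)'(r)$, and use $\int_0^\infty (m\mathcal L_h)'\,dr=0$. The positivity that ultimately produces $\langle r\rangle^{-2s}$-coercivity comes from $m'(r)\sim(1+r)^{-2s}>0$ for all $r$, paired with the pointwise estimate $\bigl(\tfrac{1}{m'(r)}\partial_r[m(r)(EI_N-V_\varphi)]w,w\bigr)_{\mathbb C^N}\geq C|w|^2$ that the weight $\varphi$ is built to satisfy. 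You instead propose the classical commutator identity $\Vert L_\varphi u\Vert^2=\Vert Au\Vert^2+\Vert Bu\Vert^2+(i[A,B]u,u)$, which is a genuinely different technique.

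The gap is in the exterior region $r>R_0$. There $\varphi'\equiv 0$, so the skew-adjoint part reduces to $B=-\varepsilon I_N$, a constant, and hence $[A,B]=0$; your $(i[A,B]u,u)$ term therefore furnishes \emph{no} coercivity at all outside $[0,R_0]$, let alone the required $h^2\langle r\rangle^{-2s}$ lower bound. Your suggested remedy — inserting a radial cutoff $\chi$ and invoking $i[-h^2\partial_r^2,-2i\chi\partial_r]\sim h\chi$ — would require replacing $B$ by a different operator that is no longer the skew-adjoint part of $L_\varphi$, which destroys the exact identity $\Vert L_\varphi u\Vert^2=\Vert Au\Vert^2+\Vert Bu\Vert^2+(i[A,B]u,u)$. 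What is missing is precisely an analogue of the $m(r)$-weight: in the energy method it enters multiplicatively in $m\mathcal L_h$ and its derivative $m'(r)$ supplies the global $(1+r)^{-2s}$ positivity, converting $E I_N-V_\varphi>0$ (for $r\geq R_0$ large) into a usable weighted lower bound. Without building such a weight into your quadratic form (e.g.\ conjugating by a second, slowly varying weight, or working with $\|m^{1/2}\cdot\|$-type norms), the commutator route cannot produce the stated estimate, and the subsequent assertions — absorbing $-2h\varphi'\partial_r V$, the role of $(E-\Vert V_\infty\Vert_{N\times N})I_N$ positivity at infinity, and the form of the $\varepsilon$-term — remain unverifiable as written.
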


The proof of this result relies mainly on two steps.

\noindent
\textbf{Step 1.} The first step consists in the construction of a weight function $\varphi\in C^{\infty}(\mathbb R_+;\mathbb R_+)$ such that for some $0<R<R_0$, $\varphi$ increases linearly on $[0,R]$, $\varphi$ increases slowly on $[R,R_0]$ and $\varphi={\rm Cste}$ on $[R_0,+\infty)$ (see Figure \ref{Fig}), so that it satisfies a nice estimate in relation with the potential $V$, more precisely,
\begin{equation}\label{main estimate weight function}
\bigg(\frac{1}{m'(r)}\partial_r \big[ m(r) \big(E I_N- V_{\varphi}(r,\omega;h)\big) \big] w,w  \bigg)_{\mathbb C^N} \geq C \vert w \vert^2_{\mathbb C^N}
\end{equation}
for some constant $C>0$, uniformly with respect to $(r,\omega)\in (0,+\infty)\times \mathbb S^{d-1}$, $h\in (0,h_0]$ and $w\in \mathbb C^N$, where $m(r) := 1 - (1+r)^{1-2s}$ and 
\begin{equation}\label{effective potential}
V_{\varphi}(r,\omega;h) := V(r,\omega) - \big( (\varphi'(r))^2 - h \varphi''(r)  \big) I_N.
\end{equation}
Here $(\cdot,\cdot)_{\mathbb C^N}$ and $\vert \cdot \vert_{\mathbb C^N}$ denote the hermitian inner product and norm in $\mathbb C^N$.

\begin{center}
\begin{figure}[h]
\begin{tikzpicture}[scale=0.7][line cap=round,line join=round]
\clip(-5,-1) rectangle (9.96,3.76);
\draw [shift={(4,-2)},line width=1.5pt]  plot[domain=1.5731659906058693:2.356194490192345,variable=\t]({1*4.2200118483246*cos(\t r)+0*4.2200118483246*sin(\t r)},{0*4.2200118483246*cos(\t r)+1*4.2200118483246*sin(\t r)});
\draw [line width=1.5pt] (0,0)-- (1.016001005362099,0.9839989946379015);
\draw [line width=1.5pt] (3.99,2.22)-- (8.99,2.26);
\draw [line width=1pt,dotted] (0,2.22)-- (4,2.22);
\draw [line width=1pt,dotted] (1.2692226702311507,1.2173521994367547)-- (1.2692226702311507,0.06);
\draw [line width=1pt,dotted] (3.99,2.22)-- (4,0);
\draw (0.000000001,-0.15) node[anchor=north west] {$0$};
\draw (0.85,-0.15) node[anchor=north west] {$R$};
\draw (3.72,-0.15) node[anchor=north west] {$R_0$};
\draw [line width=1pt,dotted] (1.2692226702311507,1.2173521994367547)-- (0,1.22);
\draw (8.7,-0.29) node{$r$};
\draw (-1.4,4) node[anchor=north west] {$\varphi(r)$};
\draw (1.2,2.1) node[anchor=north west] {$\varphi$};
\draw (-1.7,2.73) node[anchor=north west] {$\varphi(R_0)$};
\draw [very thick,->] (0,-1) -- (0,3.7);
\draw[very thick,->] (-1,0) -- (8.9,0) coordinate;
\end{tikzpicture}
\caption{The weight function $\varphi$} \label{Fig}
\end{figure}
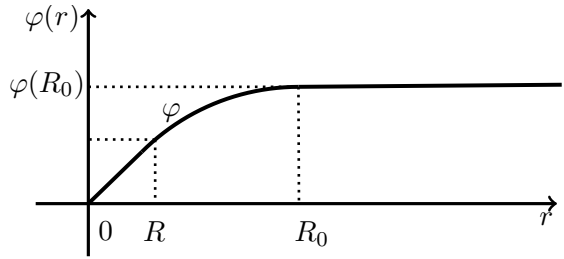
\end{center}

\noindent 
\textbf{Step 2.} Now, using the weighted function $\varphi$ constructed above, we introduce the conjugated operator 
\begin{align*}
\mathcal{P}_{\varphi} = \mathcal{P}_{\varphi}(h) &:=  e^{\varphi/h} r^{(d-1)/2} \big(P(h)- (E+i\varepsilon)I_N \big) r^{-(d-1)/2}  e^{-\varphi/h}.
 \end{align*}
We have 
 $$
\mathcal{P}_{\varphi} = \big(- h^2 \partial_r^2 + 2h \varphi'(r) \partial_r + \mathcal{Q} \big) \cdot I_N+ V_{\varphi}(r,\omega;h) - (E+i\varepsilon) I_N,
$$
where $V_\varphi$ is the effective potential defined by \eqref{effective potential} and 
$$
\mathcal{Q}= \mathcal{Q}(h):= \left\{ \begin{array}{lll} 
0 &(d=1),&\\
h^2 r^{-2} \big(- \Delta_{\mathbb S^{d-1}} + \frac{(d-1)(d-3)}{4} \big) &(d\geq 3).&
\end{array}\right.
$$
Here $\Delta_{\mathbb S^{d-1}}$ denotes the Laplacian on the unit sphere $\mathbb S^{d-1}$.

Passing to polar coordinates using that $L^2(\mathbb R^d,dx) = L^2(\mathbb R_+\times \mathbb S^{d-1}, r^{d-1} drd\omega)$, the estimate \eqref{Global Carleman estimate} is equivalent to the following one 
\begin{equation}\label{equivalence de l'estimation}
\iint_{r,\omega} m'(r) \vert u(r,\omega)\vert_{\mathbb C^N}^2 dr d\omega \leq \frac{C}{h^2}   \iint_{r,\omega}   \frac{\vert \mathcal{P}_{\varphi} u(r,\omega)\vert_{\mathbb C^N}^2}{m'(r)} dr d\omega+ \frac{C \varepsilon}{h}  \iint_{r,\omega} \vert u(r,\omega)\vert_{\mathbb C^N}^2 dr d\omega,
\end{equation}
for all $u\in e^{\varphi/h} r^{(d-1)/2} C_0^{\infty}(\mathbb R^d;\mathbb C^N)$. 

Let $\langle \cdot, \cdot \rangle_{\mathbb S^{d-1}}$ and $ \Vert \cdot \Vert_{\mathbb S^{d-1}}$ denote the inner product and norm in $L^2(\mathbb S^{d-1};\mathbb C^N)$, and we introduce the functional
$$
\mathcal{L}_h(r) := \Vert h  u'(r,\omega) \Vert_{\mathbb S^{d-1}}^2 - \big\langle (\mathcal{Q} \cdot I_N+ V_{\varphi}(r,\omega;h) - E I_N) u, u \big\rangle_{\mathbb S^{d-1}}, \quad r>0,
$$
where here and in the sequel prime notation always denote differentiation with respect to $r$, for instance $u':=\partial_ru$. Using the self-adjointness of 
$$
\mathcal{Q} \cdot  I_N+ V_{\varphi} - EI_N = \mathcal{P}_{\varphi} - \big( -h^2 \partial_r^2 + 2h \varphi'(r) \partial_r - i \varepsilon \big) \cdot I_N,
$$ 
we get 
\begin{align*}
\mathcal{L}_h'(r)=&-2 \, \text{Re}\, \langle \mathcal{P}_{\varphi} u, u' \rangle_{\mathbb S^{d-1}} + 4 h\varphi'(r)\Vert u'\Vert_{\mathbb S^{d-1}}^2 + 2 \varepsilon \; \text{Im}\, \langle u,u'\rangle_{\mathbb S^{d-1}} \\
&+ 2 \, r^{-1}  \langle \mathcal{Q} \cdot I_N u, u \rangle_{\mathbb S^{d-1}} - \langle V'_{\varphi} u, u \rangle_{\mathbb S^{d-1}}.
\end{align*}
It follows that
\begin{align*}
(m \mathcal{L}_h)'(r) = &-2 m \,\text{Re}\, \langle \mathcal{P}_{\varphi} u, u' \rangle_{\mathbb S^{d-1}}  + (4 h^{-1}m \varphi' + m') \Vert h u' \Vert_{\mathbb S^{d-1}} ^2 + 2 m \varepsilon \;   \text{Im}\, \langle u,u'\rangle_{\mathbb S^{d-1}}   \\
&+ (2 m r^{-1}-m') \langle \mathcal{Q} \cdot I_N u, u \rangle_{\mathbb S^{d-1}}
   + \big\langle \partial_r\big[ m(EI_N-V_{\varphi})\big] u,u  \big\rangle_{\mathbb S^{d-1}}. 
\end{align*}
First, using the fact that 
$$
m\varphi' \geq 0, \;\; m' >0, \;\; \mathcal{Q} \geq 0, \;\; 2 mr^{-1} - m' >0,
$$
we obtain 
\begin{align*}
(m\mathcal{L}_h)'(r) \geq  &-2 m \text{Re}\, \langle \mathcal{P}_{\varphi} u, u' \rangle_{\mathbb S^{d-1}}   +  m' \Vert h u' \Vert_{\mathbb S^{d-1}} ^2  + \langle \partial_r\big[ m(EI_N-V_{\varphi})\big]  u,u \rangle_{\mathbb S^{d-1}} \\
&+ 2 m \varepsilon \;   \text{Im}\, \langle u,u'\rangle_{\mathbb S^{d-1}}.
\end{align*}
Then, using the inequality $-2\, \text{Re}\, \langle a,b \rangle + \Vert b\Vert^2 \geq - \Vert a\Vert^2$, we get
\begin{align*}
(m\mathcal{L}_h)'(r) &\geq  - \frac{1}{h^2 m'} \Vert \mathcal{P}_{\varphi}u \Vert^2_{\mathbb S^{d-1}}  + 2 m \varepsilon \;  \text{Im}\, \langle u,u'\rangle_{\mathbb S^{d-1}}  + \langle  \partial_r\big[ m(EI_N-V_{\varphi})\big] u,u \rangle_{\mathbb S^{d-1}}.
\end{align*}
By integrating with respect to $r$ using the fact that $\int_0^{+\infty} (m \mathcal{L}_h)'(r) dr  = 0$ and the Cauchy-Schwarz inequality, we obtain 
$$
\iint_{r,\omega} \big(   \partial_r\big[ m(EI_N-V_{\varphi})\big] u, u \big)_{\mathbb C^N} dr d\omega \leq \frac{1}{h^2} \iint_{r,\omega} \frac{\vert \mathcal{P}_{\varphi} u \vert^2_{\mathbb C^N}}{m'(r)} dr d\omega+ 2 \varepsilon \iint_{r,\omega}  \vert u \vert_{\mathbb C^N}  \vert u' \vert_{\mathbb C^N} dr d\omega .
$$
It follows from estimate \eqref{main estimate weight function} that 
\begin{equation}\label{Latst}
C \iint_{r,\omega} m'(r)\vert w\vert^2_{\mathbb C^N} dr d\omega \leq \frac{1}{h^2} \iint_{r,\omega} \frac{\vert \mathcal{P}_{\varphi} u \vert^2_{\mathbb C^N}}{m'(r)} dr d\omega+ 2 \varepsilon \iint_{r,\omega}  \vert u \vert_{\mathbb C^N}  \vert u' \vert_{\mathbb C^N} dr d\omega,
\end{equation}
for some constant $C>0$. On the other hand, it is not difficult to prove that there exists a constant $C'>0$ such that 
\begin{equation}\label{last equation proof}
 2 \varepsilon \iint_{r,\omega}  \vert u \vert_{\mathbb C^N}  \vert u' \vert_{\mathbb C^N} dr d\omega  \leq \frac{C' \varepsilon}{h} \iint_{r,\omega} \vert u \vert_{\mathbb C^N}^2 dr d\omega + \frac{C' \varepsilon}{h}  \iint_{r,\omega} \vert \mathcal{P}_{\varphi} u \vert_{\mathbb C^N}^2 dr d\omega.
\end{equation}
Putting together \eqref{Latst} and \eqref{last equation proof} we obtain \eqref{equivalence de l'estimation}. This ends the proof of Proposition \ref{prop Carleman}.
\begin{flushright}
$\square$
\end{flushright}

Now, using the fact that $\varphi(r)= \frac{C_0}{2}$ for $r \geq R_0$, with $C_0=2 \max_{(0,+\infty)} \varphi$, and estimate \eqref{Global Carleman estimate}, we obtain 
\begin{align}\label{aw1}
e^{-C_0/h} \Vert   \mathbbm 1_{\{x\in \mathbb R^d;\vert x\vert\leq R_0\}}v \Vert^2_{L^{2,-s}}  &+    \Vert   \mathbbm 1_{\{x\in \mathbb R^d;\vert x\vert\geq R_0\}} v \Vert^2_{L^{2,-s}} \leq
\nonumber \\
&\frac{C}{h^2}  \big\Vert \big(P-(E+i\varepsilon)I_N \big)v \big\Vert^2_{L^{2,s}} + \frac{C\varepsilon}{h} \big\Vert  v \big\Vert^2_{L^2},
\end{align}
for some constant $C>0$, uniformly for $v\in C_0^{\infty}(\mathbb R^d;\mathbb C^N)$, $\varepsilon\geq 0$ and $h>0$ small enough. On the other hand, using the selfadjointness of $P$ and the Cauchy-Schwarz inequality, we get, for any $\varepsilon\geq 0$, $C'>0$ and $h>0$ small enough,
\begin{align}
2 \varepsilon \Vert v\Vert^2_{L^2} &=  - 2 \, {\rm Im}\, \langle \big(P-(E+i\varepsilon)I_N \big) v, v \rangle_{L^2} \nonumber \\ 
&\leq  \frac{C'}{h}  \Vert  \mathbbm 1_{\{x\in \mathbb R^d;\vert x\vert\geq R_0\}}\big(P-(E+i\varepsilon)I_N \big) v \Vert^2_{L^{2,s}} + \frac{h}{C'} \Vert  \mathbbm 1_{\{x\in \mathbb R^d;\vert x\vert\geq R_0\}} v \Vert^2_{L^{2,-s}}  \nonumber \\
&+  e^{2C_0/h}  \Vert  \mathbbm 1_{\{x\in \mathbb R^d;\vert x\vert\leq R_0\}}\big(P-(E+i\varepsilon)I_N \big) v \Vert^2_{L^{2,s}} + e^{-2C_0/h} \Vert\mathbbm 1_{\{x\in \mathbb R^d;\vert x\vert\leq R_0\}} v \Vert^2_{L^{2,-s}}. \label{aw2}
\end{align}
Combining \eqref{aw1} and \eqref{aw2}, we get 
\begin{align*} 
e^{-C/h} &\Vert  \mathbbm 1_{\{x\in \mathbb R^d;\vert x\vert\leq R_0\}}v \Vert^2_{L^{2,-s}} +    \Vert   \mathbbm 1_{\{x\in \mathbb R^d;\vert x\vert\geq R_0\}} v \Vert^2_{L^{2,-s}}  \leq \nonumber \\
& e^{C/h}  \big\Vert  \mathbbm 1_{\{x\in \mathbb R^d\vert x\vert\leq R_0\}} \big(P-(E+i\varepsilon)I_N \big)v \big\Vert^2_{L^{2,s}} + \frac{C}{h^2} \big\Vert  \mathbbm 1_{\{x\in \mathbb R^d;\vert x\vert\geq R_0\}} \big(P-(E+i\varepsilon)I_N \big)v \big\Vert^2_{L^{2,s}},
\end{align*}
uniformly for $v\in C_0^{\infty}(\mathbb R^d;\mathbb C^N)$, $\varepsilon\geq 0$ and $h>0$ small enough.

Using this estimate, the proof can be finished by the density argument of \cite{Datchev}. 
\begin{flushright}
$\square$
\end{flushright}

\section{Resonance free domains}

This section is devoted to the proofs of Theorems \ref{exponential resonance free region} and \ref{region logarithmique}. We use the ordinary notations and some basic results of semiclassical analysis referring for example to the textbooks \cite{Dimassi,Zworski} for a clear presentation of this theory.

We first prove Theorem \ref{region logarithmique} using the approach to conjugate operators developed in \cite{Sj-Zw1}. 

\subsection{Proof of Theorem \ref{region logarithmique}}

Fix $E_0>\Vert V_{\infty}\Vert_{N\times N}$, and let 
$$
p(x,\xi):=\vert \xi \vert^2 I_N + V(x), \quad (x,\xi)\in T^*\mathbb R^{d},
$$
be the semiclassical symbol of $P(h)$. Let $G\in C^{\infty}(T^*\mathbb R^d;\mathbb R)$ be an escape function associated with $p$ at $E_0$, i.e., $G$ satisfies \eqref{la fonction fuite} on $\Sigma_{E_0}$. For  $|(x,\xi)|$ large enough, the function $T^*\mathbb R^{d}\ni (x,\xi)\mapsto x\cdot \xi$ is an escape function associated with $p$ at $E_0$. Indeed, we have 
$$
\{p,x\cdot \xi\}(x,\xi)= 2 \vert \xi \vert^2 I_N - x \cdot \nabla_x V(x), \quad (x,\xi)\in T^*\mathbb R^{d}.
$$ 
The assumption \textbf{($\textbf{Hol}_{\infty}$)} and the Cauchy formula imply that $x \cdot \nabla_x V(x) \rightarrow 0$ as $\vert x\vert\rightarrow + \infty$. On the other hand, by the assumption \textbf{($\textbf{Hol}_{\infty}$)}, $\vert \xi \vert^2\geq (E_0-\Vert V_{\infty}\Vert_{N\times N})/2$ on the energy surface $\Sigma_{E_0}$ for $|(x,\xi)|$ large enough. Therefore, 
$$
\{p,x\cdot \xi\}(x,\xi)\geq  (E_0-\Vert V_{\infty}\Vert_{N\times N})/2,
$$
for all $(x,\xi)\in \Sigma_{E_0}\cap \{|(x,\xi)|\gg 1\}$. Thus, without any loss of generality, we may assume that $G(x,\xi)=x\cdot \xi$ for $|(x,\xi)|$ large enough. 

We set
$$
\mathcal{G}(x,\xi) := G(x,\xi) - F(x)\cdot \xi \in C_0^{\infty}(\mathbb R^d;\mathbb R),
$$
where $F$ is the vector field used in the complex distortion (see \eqref{vector field}). Let $\mathcal{G}_h:= {\rm Op}^w_h(\mathcal{G})$ be the pseudodifferential operator with symbol $\mathcal{G}$. 

Let $M>0$ be independent of $h$ and set $\theta_1=\theta_1(h)=i M \kappa(h)$ with $\kappa(h):= h \vert \ln h \vert$. We introduce the conjugated operator 
$$
\widetilde{P}_{\theta_1}(h) := e^{-\frac{M \kappa(h)}{h}\mathcal{G}_h} P_{\theta_1}(h) e^{\frac{M \kappa(h)}{h}\mathcal{G}_h}.
$$
Since $\mathcal{G}$ is compactly supported it follows by the Calder\'on-Vaillancourt theorem (see for instance \cite[Chapter 7]{Dimassi}) that $\mathcal{G}_h$ is bounded in $L^2(\mathbb R^d;\mathbb C^N)$ and then the operators  $e^{\pm \frac{M \kappa(h)}{h}\mathcal{G}_h}$ are well defined.

Let $\varepsilon_0>0$ be small enough such that \eqref{la fonction fuite} holds on $\Sigma_{I_{\varepsilon_0}}:=\bigcup_{E\in I_{\varepsilon_0}}\Sigma_{E}$, i.e., there exists $C>0$ such that
\begin{equation}\label{vois}
\{p,G\}(x,\xi)\geq C , \quad \forall (x,\xi)\in \Sigma_{I_{\varepsilon_0}},
\end{equation}
in the sense of hermitian matrices, where $I_{\varepsilon_0}:=[E_0-\varepsilon_0,E_0 +\varepsilon_0]$. For $\eta>0$, we introduce the complex region
$$
\Gamma_{\eta}:=I_{\varepsilon_0} -i [0,\eta \kappa(h)].
$$
Our objective is to prove the following Lemma from which Theorem \ref{region logarithmique} follows.
\begin{lemma}\label{Lemma non-captif}
There exists a constant $c>0$ such that for all $M>0$, there exists $h_M\in (0,1]$ such that the operator $\widetilde{P}_{\theta_1}(h) - z$ is invertible for every $z\in \Gamma_{cM}$ and $h\in (0,h_M]$, and we have 
$$
\Vert (\widetilde{P}_{\theta_1}(h) - z)^{-1} \Vert = \mathcal{O}(\kappa(h)^{-1}),
$$
uniformly for $z\in \Gamma_{cM}$ and $h\in (0,h_M]$. 
\end{lemma}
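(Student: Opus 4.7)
The strategy adapts to the matrix-valued setting the Sj\"ostrand--Zworski approach to conjugate operators \cite{Sj-Zw1}, relying on the analyticity afforded by $(\textbf{Hol}_{\infty})$ and on matrix-valued semiclassical pseudodifferential calculus. The first step is to identify the principal effect of the two conjugations at the symbol level. Expanding the complex distortion in $\theta_1 = iM\kappa(h)$ via $V(x+\theta_1 F(x)) = V(x) + iM\kappa(h)\,F(x)\cdot\nabla V(x) + O(\kappa(h)^2)$, together with the corresponding expansion of the Jacobian-weighted Laplacian (whose symbol picks up $-2iM\kappa(h)\langle\xi,\nabla F(x)\xi\rangle$ to first order), one obtains
\begin{equation*}
P_{\theta_1}(h) = P(h) - iM\kappa(h)\,{\rm Op}^w_h(\{p,F\cdot\xi\}) + O_{\mathcal{L}(L^2)}(\kappa(h)^2).
\end{equation*}
A Baker--Campbell--Hausdorff expansion of $\widetilde{P}_{\theta_1}(h) = e^{-(M\kappa(h)/h)\mathcal{G}_h}P_{\theta_1}(h)e^{(M\kappa(h)/h)\mathcal{G}_h}$, in which the single commutator contributes $-iM\kappa(h)\,{\rm Op}^w_h(\{p,\mathcal{G}\})$ modulo $O(\kappa(h)^2)$ (all double and higher commutators being $O((M\kappa(h))^2)$), combined with the identity $G = F\cdot\xi + \mathcal{G}$, then yields the effective form
\begin{equation*}
\widetilde{P}_{\theta_1}(h) = P(h) - iM\kappa(h)\,{\rm Op}^w_h(\{p,G\}) + R_M(h), \qquad \|R_M(h)\|_{\mathcal{L}(L^2)} = O(\kappa(h)^2).
\end{equation*}

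The second step converts the pointwise matrix inequality $\{p,G\}\geq cI_N$ on $\Sigma_{E_0}$, which extends by continuity to $\{p,G\}\geq(c/2)I_N$ on $\Sigma_{I_{\varepsilon_0}}$ after possibly shrinking $\varepsilon_0$, into an operator lower bound. Choose $\chi\in C_0^\infty(\mathbb{R};[0,1])$ equal to $1$ on $I_{\varepsilon_0}$ and supported in a slightly larger interval $\widetilde{I}$ on which $\{p,G\}\geq(c/3)I_N$ on $\bigcup_{E\in\widetilde{I}}\Sigma_E$. By Helffer--Sj\"ostrand's almost-analytic functional calculus applied to the self-adjoint operator $P(h)$, $\chi(P(h))$ is an $h$-admissible matrix-valued pseudodifferential operator with principal symbol $\chi(p(x,\xi))$ (interpreted through the hermitian spectral calculus of $p(x,\xi)$). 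Decompose $u = u_1+u_2$ with $u_1 := \chi(P(h))u$ and $u_2 := (1-\chi(P(h)))u$. For $u_2$, self-adjointness of $P(h)$ and the spectral theorem give $\|(P(h)-z)u_2\|\geq\delta\|u_2\|$ whenever ${\rm Re}\,z\in I_{\varepsilon_0}$, with $\delta := {\rm dist}(I_{\varepsilon_0},{\rm supp}(1-\chi))>0$, whence $\|(\widetilde{P}_{\theta_1}(h)-z)u_2\|\geq(\delta/2)\|u_2\|$ for $h$ small by Step 1.

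For $u_1$, whose microlocal content is concentrated where $\{p,G\}\geq(c/3)I_N$, the sharp G\aa rding inequality for hermitian matrix-valued symbols applied to the non-negative matrix symbol $\chi(p)\bigl(\{p,G\}-(c/3)I_N\bigr)\chi(p)$ yields
\begin{equation*}
\langle{\rm Op}^w_h(\{p,G\})u_1,u_1\rangle \geq (c/3)\|u_1\|^2 - Ch\|u\|^2.
\end{equation*}
Combining this with the expansion of Step 1 and taking minus the imaginary part of $\langle(\widetilde{P}_{\theta_1}(h)-z)u_1,u_1\rangle$, then choosing the outer constant in $\Gamma_{cM}$ sufficiently small relative to $c$, produces a lower bound of the form $c'M\kappa(h)\|u_1\|^2 - CM\kappa(h)h\|u\|^2 - C_M\kappa(h)^2\|u_1\|^2$. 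Together with the commutator estimate $\|[\widetilde{P}_{\theta_1}(h),\chi(P(h))]\|_{\mathcal{L}(L^2)}=O(\kappa(h)^2)$ --- which uses the exact identity $[P(h),\chi(P(h))]=0$ --- one absorbs the coupling between $u_1$ and $u_2$ to conclude $\|u\| \leq C\kappa(h)^{-1}\|(\widetilde{P}_{\theta_1}(h)-z)u\|$ uniformly for $z\in\Gamma_{cM}$ and $h\in(0,h_M]$. The analogous lower bound for the adjoint, obtained by replacing $M$ with $-M$ throughout, yields dense range, and an application of Fredholm theory to the analytic family $\widetilde{P}_{\theta_1}(h)-z$ then upgrades the lower bound to invertibility with the claimed norm.

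The main obstacle is to make the matrix-valued symbolic machinery work uniformly so that all accumulated errors (from the BCH expansion, from the functional calculus construction of $\chi(P(h))$, and from G\aa rding) remain strictly smaller than the $O(M\kappa(h))$ gain on the imaginary part of $\widetilde{P}_{\theta_1}(h)$. The two key structural ingredients --- the Helffer--Sj\"ostrand representation for $\chi(P(h))$ with matrix-valued hermitian $P(h)$, and sharp G\aa rding for hermitian matrix symbols --- hold without any hypothesis on the eigenvalue multiplicities or crossings of $V$ beyond $(\textbf{Hol}_{\infty})$, since $\chi$ acts only through the scalar spectral calculus of the hermitian matrix $p(x,\xi)$ and $\{p,G\}$ enters only through its matrix lower bound on the energy shell.
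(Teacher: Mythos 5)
Your proposal is correct and follows the same overall strategy as the paper --- expand $\widetilde{P}_{\theta_1}(h)$ symbolically via BCH and Taylor, use the escape-function bound $\{p,G\}\geq cI_N$ to gain $\sim M\kappa(h)$ on the imaginary part near the energy shell via G\aa rding, and treat the region away from the shell by ellipticity. The genuinely different choice is the localization device: the paper quantizes a microlocal partition $\psi_1^2+\psi_2^2=1$ into self-adjoint $\Psi_1,\Psi_2$ with $\Psi_1^2+\Psi_2^2=\mathrm{Id}+\mathcal{O}(h^\infty)$, runs G\aa rding on $\Psi_1 u$, builds a parametrix for $A_{\theta_1}(h)-\mathrm{Re}\,z$ on $\mathrm{supp}\,\psi_2$, and pays commutator errors $\|[\widetilde{P}_{\theta_1},\Psi_i]u\|=\mathcal{O}(h)(\|\widetilde{P}_{\theta_1}u\|+\|u\|)$; you instead use the spectral cutoff $\chi(P(h))$ and $1-\chi(P(h))$, which buys you the spectral-theorem bound $\|(P(h)-\mathrm{Re}\,z)u_2\|\geq\delta\|u_2\|$ in place of a parametrix and the exact identity $[\chi(P(h)),P(h)]=0$, making the remaining commutator $\mathcal{O}(M\kappa(h)h)$ transparent. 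Both variants work for the same reason --- $\chi(p(x,\xi))$ is a smooth matrix-valued symbol regardless of eigenvalue crossings, so the Helffer--Sj\"ostrand calculus and matrix sharp G\aa rding apply --- and both give a constant $c$ in $\Gamma_{cM}$ determined by the escape-function constant alone, hence independent of $M$ as required. One small caveat: to make $\delta=\mathrm{dist}(I_{\varepsilon_0},\mathrm{supp}(1-\chi))$ strictly positive you need $\chi\equiv 1$ on an open neighbourhood of $I_{\varepsilon_0}$, not just on $I_{\varepsilon_0}$ itself; and, as in the paper, the unbounded second-order remainder $\widetilde{P}_{\theta_1}(h)-P(h)$ acting on $u_2$ must be controlled by first reducing to $\|u\|=1$, $\|(\widetilde{P}_{\theta_1}(h)-z)u\|\lesssim 1$ so that $\|u\|_{H^2_h}=\mathcal{O}(1)$; these are routine but worth stating.
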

\begin{remark}
In particular, from the above Lemma we get immediately that the operator $P_{\theta_1}(h) - z$ is invertible for every $z\in \Gamma_{cM}$ and $h\in (0,h_M]$, hence $P(h)$ has no resonances in $\Gamma_{cM}$ for all $M>0$ and $h\in (0,h_M]$. Furthermore, we have the following estimate on the distorted resolvent
\begin{equation}\label{distortion non cap}
\Vert (P_{\theta_1}(h) - z)^{-1} \Vert = \mathcal{O}(h^{-C}),
\end{equation}
uniformly for $z\in \Gamma_{cM}$ and $h\in (0,h_M]$, for some constant $C>0$.
\end{remark}

\begin{proof}
We have 
$$
\widetilde{P}_{\theta_1}(h) =  e^{-\frac{M \kappa(h)}{h}{ad}_{\mathcal{G}_h}} P_{\theta_1}(h) \sim \sum_{k=0}^{+\infty} \frac{(-M \kappa(h))^k}{\fact{k}} \big( \frac{1}{h} {\rm ad}_{\mathcal{G}_h}  \big)^k P_{\theta_1}(h),
$$
where here we use the usual notation ${\rm ad}_{A}B:= [A,B]$ for the commutator. The fact that $\mathcal{G}$ is scalar-valued and compactly supported ensures that ${\rm ad}_{\mathcal{G}_h}P_{\theta_1}(h) = [\mathcal{G}_h,P_{\theta_1}(h)]=\mathcal{O}(h)$ (in norm $\mathcal{L}(L^2)$) and then the previous asymptotic expansion makes sense since $\kappa(h)\rightarrow 0$ as $h$ tends to $0$. In particular, we have 
$$
\widetilde{P}_{\theta_1}(h) = P_{\theta_1}(h)  - \frac{M \kappa(h)}{h}[\mathcal{G}_h,P_{\theta_1}(h)] + \mathcal{O}(M^2 \kappa(h)^2).
$$
Let $p_{\theta_1}, \widetilde{p}_{\theta_1}$ be the semiclassical symbols corresponding to $P_{\theta_1}(h)$ and $\widetilde{P}_{\theta_1}(h)$ respectively. By the $h$-pseudodifferential symbolic calculus (see for instance \cite{Dimassi,Zworski}), we have 
\begin{equation}\label{equ1}
\widetilde{p}_{\theta_1}(x,\xi) = p_{\theta_1}(x,\xi) - iM \kappa(h) \{\mathcal{G},p_{\theta_1}\}(x,\xi) + \mathcal{O}(M^2 \kappa(h)^2). 
\end{equation}
On the other hand, by Taylor's expansion of $p_{\theta_1}$ with respect to $\theta_1$, we get 
\begin{equation}\label{equ2}
p_{\theta_1}(x,\xi) = p(x,\xi) - i M \kappa(h) \{p, F(x)\cdot\xi\}(x,\xi) + \mathcal{O}(M^2 \kappa(h)^2).
\end{equation}
Combining \eqref{equ1} and \eqref{equ2}, we obtain 
\begin{equation}\label{part imaginaire}
{\rm Im}\, \widetilde{p}_{\theta_1}(x,\xi) = -M \kappa(h) \{p,\mathcal{G}+F(x)\cdot \xi\}(x,\xi) + \mathcal{O}(M^2 \kappa(h)^2),
\end{equation}
$$
{\rm Re}\,\widetilde{p}_{\theta_1}(x,\xi) = p(x,\xi) + \mathcal{O}(M \kappa(h)).
$$
According to \eqref{vois}, there exists $C>0$ such that  
\begin{equation}\label{estimate on the imaginary part}
- {\rm Im}\, \widetilde{p}_{\theta_1}(x,\xi) \geq C M \kappa(h) , \quad \forall\, (x,\xi)\in \Sigma_{I_{\varepsilon_0}}.
\end{equation}
We write $\widetilde{P}_{\theta_1}(h) -z = A_{\theta_1}(h) - {\rm Re}\, z + i (B_{\theta_1}(h)-{\rm Im}\, z)$, where $A_{\theta_1}(h)$ and $B_{\theta_1}(h)$ are the self-adjoint operators given by 
$$
A_{\theta_1}(h) := \frac{1}{2}\big(\widetilde{P}_{\theta_1}(h) + (\widetilde{P}_{\theta_1}(h))^*\big) , \quad \quad B_{\theta_1}(h):= \frac{1}{2i}\big(\widetilde{P}_{\theta_1}(h) - (\widetilde{P}_{\theta_1}(h))^*\big).
$$
Let $\psi_{1},\psi_{2} \in C^{\infty}(\mathbb R^{2d};\mathbb R)$ be such that, for $I\Subset I_{\varepsilon_0}$,
\begin{equation}
\left\{ \begin{array}{lll}
\psi_{1}^2+\psi_{2}^2=1 \;\; {\rm on} \; \mathbb R^{2d} \\
{\psi_{1}}= 1    \;\; {\rm on} \;\;  \Sigma_{I} \;\;\; {\rm and} \;\; \text{supp}(\psi_{1}) \subset  \Sigma_{I_{\varepsilon_0}}.
\end{array}\right.
\end{equation}
According to Lemma 3.2 in \cite{Sj-Zw1}, there exist two self-adjoint operators $\Psi_{1}$ and $\Psi_{2}$ with principal symbols respectively $\psi_{1}$ and $\psi_{2}$ such that
\begin{equation}\label{Ident}
(\Psi_{1})^2 + (\Psi_{2})^2={\rm  Id} + \mathcal{O}(h^{\infty}) \quad \text{in}\;\; \mathcal{L}(L^2(\mathbb R^d;\mathbb C^N)).
\end{equation}
We denote by the same letters the operators $\Psi_{i}:=\Psi_{i} I_N$, $i=1,2$. On the support of $\psi_{1}$, we see from (\ref{estimate on the imaginary part}) that the principal symbol of $-B_{\theta_1}(h)$ is bounded from below by $CM \kappa(h)$. Thus, by G\aa rding's inequality (see e.g. \cite{Dimassi, Zworski}), we have for all $u\in L^2(\mathbb R^d;\mathbb C^N)$
\begin{align}\label{Gard}
 \Vert (\widetilde{P}_{\theta_1}(h)-z) \Psi_{1} u \Vert \cdot \Vert \Psi_{1} u\Vert &\geq \langle (\widetilde{P}_{\theta_1}(h) -z)\Psi_{1} u, \Psi_{1} u   \rangle  \nonumber \\
 & \geq   \langle ({\rm Im}\, \widetilde{P}_{\theta_1}(h)-{\rm Im}\, z) \Psi_{1} u, \Psi_{1} u\rangle \nonumber \\
 &= \langle ({\rm Im}\,z-B_{\theta_1}(h)) \Psi_{1} u, \Psi_{1} u \rangle \nonumber \\
 & \geq ( {\rm Im}\, z+CM\kappa(h)-\mathcal O(h))\Vert \Psi_{1} u \Vert^2 \nonumber \\
 & \geq  \frac{C}{3} M\kappa(h)\Vert \Psi_{1} u \Vert^2,
\end{align}
uniformly for $\Im z>-\frac{C}{3}M\kappa(h)$.


On the other hand, since $A_{\theta_1}(h)-{\rm Re}\, z$ is uniformly elliptic on the support of $\psi_{2}$ and ${\rm Re}\, z\in I_{\varepsilon_0}$, the symbolic calculus permits us to construct a parametrix  $R\in \mathcal{S}^0(\langle \xi\rangle^{-2})$ of $A_{\theta_1}(h)-{\rm Re}\, z$ such that, in the sense of corresponding symbols,
$$
R\# (A_{\theta_1}(h)-{\rm Re}\, z)\psi_{2}=\psi_{2}+{\mathcal O}(h^\infty),
$$
where $\#$ stands for the Weyl composition of symbols.  As a consequence, we obtain
\begin{eqnarray}\label{ell}
\Vert( \widetilde{P}_{\theta_1}(h)-z) \Psi_{2} u \Vert \geq \frac{1}{C'} \Vert \Psi_{2} u \Vert-{\mathcal O}(h^\infty)\Vert u\Vert^2.
\end{eqnarray}
for all $u\in L^2(\mathbb R^d;\mathbb C^N)$. Furthermore,  by means of standard  elliptic arguments, one can easily prove  the following semiclassical inequality, for $i=1,2$,
\begin{equation}\label{oubli6}
\Vert [\widetilde{P}_{\theta_1}(h), \Psi_{i}] u \Vert \leq C_2h (\Vert \widetilde{P}_{\theta_1}(h) u \Vert+\Vert u\Vert), \quad \forall\, u\in H^2({\mathbb R}^d; \mathbb C^N).
\end{equation}
Combining \eqref{Ident},  \eqref{Gard},  (\ref{ell}), and \eqref{oubli6}   with the estimate
\begin{equation}\label{Oubli1234}
\Vert (\widetilde{P}_{\theta_1}(h)-z)u \Vert^2 {=} \sum_{i=1}^2\Vert \Psi_{i}(\widetilde{P}_{\theta_1}(h)-z) u \Vert^2    -  {\mathcal   O}( h^{\infty})      \Vert (\widetilde{P}_{\theta_1}(h)-z) u \Vert^2 
\end{equation}
$$
\geq \frac{1}{2}\sum_{i=1}^2 \Vert (\widetilde{P}_{\theta_1}(h)- z)\Psi_{i} u \Vert^2-\sum_{i=1}^2\Vert [\widetilde{P}_{\theta_1}(h), \Psi_{i}] u \Vert^2 -{\mathcal  O}( h^{\infty})   \Vert (\widetilde{P}_{\theta_1}(h)-z) u \Vert^2 ,
$$
we deduce, for  $z\in \Gamma_{cM}:= \{z\in \mathbb C;\; {\rm Re}\, z\in I_{\varepsilon_0} \; {\rm and}\; {\rm Im}\, z \geq -cM \kappa(h)\}$ (with $c>0$ independent of $M$ and $h$) and sufficiently small $h$,
\begin{equation}\label{Est}
\Vert (\widetilde{P}_{\theta_1}(h)-z)  u\Vert \geq \frac{\kappa(h)}{C} \Vert u\Vert.
\end{equation}
By the same arguments, we prove an estimate similar to \eqref{Est}  for the adjoint operator
$\big(\widetilde{P}_{\theta_1}(h)\big)^*-\overline{z}$ and 
we conclude that $\widetilde{P}_{\theta_1}(h)-z$ is invertible for every $z\in \Gamma_{cM}$. Hence, $\widetilde{P}_{\theta_1}(h)$ has no spectrum in $\Gamma_{cM}$ and we have the estimate 
\begin{equation}
\Vert (\widetilde{P}_{\theta_1}(h)-z)^{-1} \Vert \leq C \kappa(h)^{-1} ,
\end{equation}
uniformly for $z\in \Gamma_{cM}$.
\end{proof}

\subsection{Proof of Theorem \ref{exponential resonance free region}}

We start by proving the following estimate on the distorted resolvent.
\begin{lemma}\label{Lem} \normalfont
Assume \textbf{($\textbf{Hol}_{\infty}$}) and let $\Vert V_{\infty}\Vert_{N\times N}<\alpha<\beta<+\infty$, $\eta>0$ and $\theta = h\vert \ln h \vert$. Then, there exist a constant $C>0$ and $h_0\in (0,1]$ such that 
\begin{equation}\label{equivalence des normes}
 \Vert (P_{\theta}(h)-z)^{-1}\Vert = \mathcal{O}(e^{C/h}),
\end{equation}
uniformly for $z\in [\alpha,\beta]- i [0, \eta h \vert \ln h\vert]$, $z\notin {\rm Res}\,(P(h))$ and $h\in (0,h_0]$.
\end{lemma}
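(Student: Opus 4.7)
The plan is to prove \eqref{equivalence des normes} by combining the exponential resolvent bound for the undistorted operator $P(h)$ provided by Theorem \ref{Exponential estimate resolvent} with a semiclassical Phragm\'en--Lindel\"of-type maximum principle (in the spirit of \cite{Burq1}), in order to propagate the bound from the real axis down into the thin rectangle $[\alpha,\beta]-i[0,\eta h|\ln h|]$ where the resonances of $P(h)$ live. The transfer from the undistorted to the distorted operator exploits the fact that the complex distortion $U_{i\theta}$ acts as the identity on $\{|x|\le A\}$, so that differences between the two operators are localized at infinity and of size $\mathcal{O}(\theta)=\mathcal{O}(h|\ln h|)$.

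First I would set up the enlarged rectangle $\mathcal R := [\alpha,\beta] + i[-\eta h|\ln h|,\,1]$ and establish bounds on its upper portion. On the top side ${\rm Im}\,z = 1$, a direct semiclassical elliptic estimate---using that $P_\theta(h)$ is a small non-self-adjoint perturbation of the self-adjoint operator $P(h)$ of size $\mathcal{O}(\theta)$ and that $z$ sits at distance $\gtrsim 1$ from $\sigma(P(h))$---yields $\|(P_\theta(h)-z)^{-1}\| = \mathcal{O}(1)$. On the real axis, I would exploit that $U_{i\theta}$ equals the identity on $\{|x|\le A\}$: for any cutoff $\chi\in C_0^\infty(\mathbb R^d)$ with ${\rm supp}\,\chi\subset\{|x|\le A\}$, the identity $\chi(P_\theta(h)-z)^{-1}\chi = \chi(P(h)-z)^{-1}\chi$ holds for ${\rm Im}\,z > 0$ and extends by analytic continuation to the boundary values ${\rm Im}\,z \to 0^+$. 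Combining this with \eqref{estimation exponentielle du th} for the interior contribution and \eqref{estimation du th} together with a smooth cutoff resolvent identity gluing the two regions yields $\|(P_\theta(h)-E\mp i0)^{-1}\|_{L^2\to L^2} \le e^{C/h}$ for all $E \in [\alpha,\beta]$.

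With a polynomial bound on the top of $\mathcal R$ and an exponential bound on its upper edge (the real axis), the final step is to invoke the semiclassical maximum principle applied to the operator-valued function $z\mapsto(P_\theta(h)-z)^{-1}$, holomorphic on $\mathcal R \setminus {\rm Res}\,(P(h))$, in order to interpolate between the two bounds throughout $\mathcal R$. The main obstacle is the meromorphic character of the resolvent in $\mathcal R$, whose poles are exactly the resonances of $P(h)$; the standard fix is to multiply by a polynomial $B(z):=\prod_{z_j\in{\rm Res}\,(P(h))\cap\widetilde{\mathcal R}}(z-z_j)$ vanishing at all the resonances in a slightly enlarged rectangle $\widetilde{\mathcal R}$, and to apply the Phragm\'en--Lindel\"of principle to the holomorphic scalar function $z\mapsto\langle B(z)(P_\theta(h)-z)^{-1}u,v\rangle$. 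A standard Weyl-type counting for distorted operators (see e.g.\ \cite{Sj-Zw}) bounds the number of resonances in $\widetilde{\mathcal R}$ polynomially in $h^{-1}$, so the contribution of $B$ to the exponent is at most $h^{-C}$, which is absorbed in $e^{C/h}$. Dividing back and using the hypothesis $z\notin{\rm Res}\,(P(h))$ then yields \eqref{equivalence des normes}.
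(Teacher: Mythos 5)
Your route is genuinely different from the paper's. The paper does not use a maximum principle at all: its proof of Lemma~\ref{Lem} hinges on the introduction of a reference operator $\mathcal{A}(h) = -h^2\Delta\cdot I_N + (1-\psi(x/r))V(x)$, whose potential coincides with $V$ outside a compact set but is cut off near the origin so that, for $r$ large, $x\cdot\xi$ is a global escape function at energies in $[\alpha,\beta]$. Lemma~\ref{Lemma non-captif} (applied to $\mathcal{A}$) then gives the polynomial bound $\|(\mathcal{A}_\theta(h)-z)^{-1}\|=\mathcal{O}(h^{-C'})$ uniformly in the whole rectangle $[\alpha,\beta]-i[0,\eta h|\ln h|]$. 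Iterating the resolvent identity twice and using that $V-V_r$ is compactly supported (and that the distortion is trivial on its support) produces
\begin{equation*}
(P_\theta-z)^{-1}=(\mathcal{A}_\theta-z)^{-1}-(\mathcal{A}_\theta-z)^{-1}(V-V_r)(\mathcal{A}_\theta-z)^{-1}+(\mathcal{A}_\theta-z)^{-1}(V-V_r)\chi(P-z)^{-1}\chi(V-V_r)(\mathcal{A}_\theta-z)^{-1},
\end{equation*}
where the undistorted cutoff resolvent in the middle is controlled by the Carleman estimate \eqref{estimation exponentielle du th}. This is a purely algebraic/resolvent-identity argument: no complex interpolation, no polynomial multiplier, no counting of resonances.

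Your Phragm\'en--Lindel\"of plan, while in the spirit of Burq's original work, has substantive gaps as written. First, you need the \emph{full} $L^2\to L^2$ bound on $(P_\theta(h)-E)^{-1}$ on the real axis, not merely on $\chi(P_\theta(h)-E)^{-1}\chi$; the phrase ``a smooth cutoff resolvent identity gluing the two regions'' does not supply this, and the standard clean way to do it is precisely to introduce a nontrapping reference operator whose distorted resolvent is globally polynomially bounded (which is what the paper does). Second, a maximum-principle argument needs an a priori bound on the holomorphic function over the \emph{whole} rectangle $\mathcal{R}$ (including the bottom edge ${\rm Im}\,z=-\eta h|\ln h|$ and the two lateral edges); you only produce bounds at ${\rm Im}\,z=1$ and ${\rm Im}\,z=0$, which is not enough to conclude a bound at negative imaginary part. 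In the Burq/Sj\"ostrand--Zworski setting this a priori bound is typically extracted from a Grushin/determinant argument and is an additional nontrivial ingredient, not a remark. Third, the bookkeeping of the multiplier $B(z)=\prod_j(z-z_j)$ is off: with $\mathcal{O}(h^{-C})$ resonances and factors of size $\mathcal{O}(1)$, $|B|$ can be as large as $e^{\mathcal{O}(h^{-C})}$, which is not ``absorbed in $e^{C/h}$'' unless $C\le 1$; and dividing back by $B(z)$ costs unboundedly much as $z$ approaches a resonance, so the naive division cannot yield the uniform estimate for all $z\notin{\rm Res}\,(P(h))$ as claimed. In short, the missing idea relative to the paper is the compactly cut-off nontrapping reference operator $\mathcal{A}(h)$, which both supplies the missing a priori distorted bound and converts the problem into a bound on $\chi(P-z)^{-1}\chi$ handled directly by Theorem~\ref{Exponential estimate resolvent}.
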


\begin{proof}
Let $\psi\in C_0^{\infty}(\mathbb R^d;[0,1])$ be such that $\psi(x)=1$ for $\vert x\vert \leq 1$ and $\psi(x)=0$ for $\vert x\vert \geq 2$. We introduce the Schr\"odinger operator 
$$
\mathcal{A}(h) := -h^2 \Delta \cdot I_N + V_r(x), \quad {\rm with} \;\;\; V_r(x) :=  \left(1- \psi\left(\frac{x}{r}\right) \right) V(x), \;\;\; r>0.
$$
Let $a_r(x,\xi):= \vert \xi \vert^2 I_N + V_r(x)$ be the semiclassical symbol of $\mathcal{A}(h)$ and set
$$
\Sigma_{E}^r := \bigcup_{j=1}^N \left\{(x,\xi)\in \mathbb R^{2d};\; \vert \xi\vert^2 + \left(1- \psi\left(\frac{x}{r}\right) \right)\lambda_j(x)=E \right\}, \;\;\; \Sigma_{[\alpha,\beta]}^r:= \bigcup_{E\in [\alpha,\beta]}\Sigma_{E}^r.
$$
where we recall that the $\lambda_j(x)$'s are the eigenvalues of $V(x)$, $x\in \mathbb R^d$. From \textbf{($\textbf{Hol}_{\infty}$}), for $r>0$ large enough, we have $\vert \xi\vert^2 > \alpha/2$ on $\pi_{\xi}(\Sigma_{[\alpha,\beta]}^r)$. Here $\pi_{\xi}$ denotes the spatial projection $(x,\xi)\mapsto \xi$. On the other hand, using that $x\cdot \nabla_x V(x)\rightarrow +\infty$ as $\vert x\vert\rightarrow + \infty$ according to \textbf{($\textbf{Hol}_{\infty}$}) and the Cauchy formula, we get 
\begin{eqnarray*}
\{a_r,x\cdot \xi\}(x,\xi) &=& 2 \vert \xi\vert^2 I_N + \frac{x}{r} \nabla_x \psi\left(\frac{x}{r}\right) V(x) - \left(1- \psi\left(\frac{x}{r}\right) \right) x \nabla_x V(x) \\
&\geq & \alpha/2,
\end{eqnarray*}
for $(x,\xi)\in \Sigma_{[\alpha,\beta]}^r$ and $r>0$ large enough. Thus, we deduce that $\mathbb R^{2d}\ni (x,\xi)\mapsto x\cdot \xi$ is an escape function associated with $a_r$ on $\Sigma_{[\alpha,\beta]}^r$ for $r>0$ large enough.  

Let $\mathcal{A}_{\theta}(h)$ be the distorted operator associated with $\mathcal{A}(h)$, obtained by replacing $P(h)$ with $\mathcal{A}(h)$ in \eqref{Distorted operator}. Let $\chi\in C_0^{\infty}(\mathbb R^d;[0,1])$ such that $\chi(x)=1$ for $\vert x \vert <2r$. In particular, we have 
$$
\chi(V-V_r)= \chi \psi_r V = (V-V_r).
$$ 
Using the resolvent identity, we get for $z\notin {\rm Res}\, (P(h))$,
\begin{align*}
(P_{\theta}(h)-z)^{-1} =& (\mathcal{A}_{\theta}(h)-z)^{-1} - (P_{\theta}(h)-z)^{-1} (V-V_r)  (\mathcal{A}_{\theta}(h)-z)^{-1} \\
=& (\mathcal{A}_{\theta}(h)-z)^{-1} - (\mathcal{A}_{\theta}(h)-z)^{-1} (V-V_r)  (\mathcal{A}_{\theta}(h)-z)^{-1} \\
+& (\mathcal{A}_{\theta}(h)-z)^{-1} (V-V_r)  (P_{\theta}(h)-z)^{-1}  (V-V_r) (\mathcal{A}_{\theta}(h)-z)^{-1} \\
=& (\mathcal{A}_{\theta}(h)-z)^{-1} - (\mathcal{A}_{\theta}(h)-z)^{-1} (V-V_r)  (\mathcal{A}_{\theta}(h)-z)^{-1} \\
+& (\mathcal{A}_{\theta}(h)-z)^{-1} (V-V_r)  \chi(P(h)-z)^{-1}  \chi(V-V_r) (\mathcal{A}_{\theta}(h)-z)^{-1} .
\end{align*}
According to the non-trapping estimate \eqref{distortion non cap}, there exists a constant $C'>0$ such that 
\begin{equation}
\Vert (\mathcal{A}_{\theta}(h)-z)^{-1} \Vert = \mathcal{O}(h^{-C'}),
\end{equation}
uniformly for  $z\in [\alpha,\beta]- i [0, \eta h \vert \ln h\vert]$ and $h>0$ small enough. It follows that 
\begin{equation}\label{estui1}
\Vert (P_{\theta}(h)-z)^{-1} \Vert = \mathcal{O}\big(h^{-2 C'} + h^{-2 C'} \Vert\chi(P(h)-z)^{-1}  \chi \Vert \big),
\end{equation}
uniformly for  $z\in [\alpha,\beta]- i [0, \eta h \vert \ln h\vert]$, $z\notin {\rm Res}\,(P(h))$ and $h>0$ small enough. On the other hand, the weighted estimate \eqref{estimation exponentielle du th} clearly implies the same estimate for the truncated resolvent, that is
\begin{equation}\label{estui2}
 \Vert\chi(P(h)-z)^{-1}  \chi \Vert  = \mathcal{O}(e^{C''/h}),
\end{equation}
for some constant $C''>0$, uniformly for $z\in [\alpha,\beta]- i [0, \eta h \vert \ln h\vert]$ and $h>0$ small enough. Putting together \eqref{estui1} and \eqref{estui2}, we obtain the desired estimate \eqref{equivalence des normes}.
\end{proof}

\noindent
\textbf{End of the proof of Theorem \ref{exponential resonance free region}}. Let $J=[\alpha,\beta]\subset (\Vert V_{\infty}\Vert_{N\times N}, +\infty)$ and let $\theta=h\vert \ln h \vert$. For $z\in \mathbb C$ with ${\rm Re}\, z\in J$, we write 
$$
P_{\theta}(h)-z = (P_{\theta}(h)-{\rm Re}\, z) (I-K(z;h)) \;\;\;\;\; {\rm with}\;\;\;\;   K(z;h) := i {\rm Im }\, z (P_{\theta}(h)-{\rm Re}\, z)^{-1}.
$$
According to Lemma \ref{Lem}, there exists a constant $C>0$ such that for $h$ small enough,
$$
\Vert (P_{\theta}(h)-{\rm Re}\, z)^{-1} \Vert \leq  C e^{C/h}.
$$
It follows that for $h$ small enough, ${\rm Re}\, z\in J$ and $\vert {\rm Im}\, z \vert< \frac{1}{C} e^{-C/h}$, we have 
$$
\Vert K(z;h)\Vert \leq C e^{C/h}\vert {\rm Im}\, z \vert < 1.
$$
Therefore $P_{\theta}(h)-z$ is invertible for $z\in  J -i[0, -\frac{1}{C}e^{-C/h}]$, hence $P(h)$ has no resonances in this region. This ends the proof of Theorem \ref{exponential resonance free region}.
\begin{flushright}
$\square$
\end{flushright}

\section*{Acknowledgments}
The author thanks Gilles Lebeau for valuable discussions about Carleman estimates at Universit\'e de Nice in 2018. The author also thanks Jean-Fran\c cois Bony and Claudio Fern\'andez for many stimulating remarks. The research of the author was supported by  CONICYT FONDECYT Grant No. 3180390.

\end{document}